\newcommand{\sysname}{PVF}
\newcounter{mycounter}[section]
\newtheorem{definition}[mycounter]{Definition}
\newtheorem{theorem}{Theorem}
\newtheorem{remark}{Remark}
\newtheorem{lemma}[mycounter]{Lemma}
\newtheorem{example}{Example}
\newtcolorbox{empheqboxed}{colback=blue!6, 
 colframe=white,
 width=\linewidth,
 sharpish corners,
 top=1mm, %
 bottom=0pt,
 left=2pt,
 right=2pt
}
\definecolor{deepgreen}{rgb}{0, 0.5, 0}
\begin{document}


\title{\Large \bf $\lambda$-SecAgg: Partial Vector Freezing for Lightweight Secure Aggregation \\ in Federated Learning}

\author{
{\rm Siqing Zhang}\\
University of Science and Technology of China\\
Hefei, China\\
Email: siqingzhang@mail.ustc.edu.cn
\and
{\rm Yong Liao}\\
University of Science and Technology of China\\
Hefei, China\\
Email: yliao@ustc.edu.cn
\and
{\rm Peng Yuan Zhou}\\
Aarhus University\\
Aarhus, Denmark\\
Email: pengyuan.zhou@ece.au.dk
} 

\maketitle

\begin{abstract}
Secure aggregation of user update vectors (e.g. gradients) has become a critical issue in the field of federated learning. Many Secure Aggregation Protocols (SAPs) face exorbitant computation costs, severely constraining their applicability. Given the observation that a considerable portion of SAP's computation burden stems from processing each entry in the private vectors, we propose \textbf{P}artial \textbf{V}ector \textbf{F}reezing (\textbf{PVF}), a portable module for compressing computation costs without introducing additional communication overhead. \textbf{$\bm{\lambda}$-SecAgg}, which integrates SAP with PVF, ``freezes'' a substantial portion of the private vector through specific transformations, requiring only $\frac{1}{\lambda}$ of the original vector to participate in SAP. Eventually, users can ``thaw'' the public sum of the ``frozen entries'' by the result of SAP. To avoid potential privacy leakage, we devise Disrupting Variables Extension for PVF.
We demonstrate that PVF can seamlessly integrate with various SAPs and it poses no threat to user privacy in the semi-honest and active adversary settings. We include $7$ baselines, encompassing $5$ distinct types of masking schemes, and explore the acceleration effects of PVF on these SAPs. Empirical investigations indicate that when $\lambda=100$, PVF yields up to $99.5\times$ speedup and up to $32.3\times$ communication reduction. 
\end{abstract}

\section{Introduction}
Machine learning technologies are applied in countless fields to improve service performance. However, aggregating large amounts of data for big data mining raises concerns regarding data privacy~\cite{liu2021machine}. \textit{Federated Learning} (FL)~\cite{mcmahan2017communication} keeps original data on the local devices while only requiring data owners to submit local training updates to a central server. Nonetheless, as~\cite{zhu2019deep} and \cite{geiping2020inverting} indicate, attackers can infer a user's local data by reversing the submitted updates. To address this issue, numerous research efforts have been focusing on \textit{Secure Aggregation Protocols} (SAPs)~\cite{liu2022privacy} for aggregating all user's model information while preserving individual privacy.

The widely discussed SAPs are based on \textit{Secure Multi-Party Computation} (SMPC)~\cite{xu2022non,Sotthiwat2021PartiallyEM}, \textit{Mask}~\cite{Bonawitz2017PracticalSA}, \textit{Homomorphic Encryption} (HE)~\cite{aono2017privacy} and \textit{Differential Privacy} (DP)~\cite{wei2020federated}. For most SAPs, except for DP-based ones, the computation overhead often scales proportionally with the length of the model update vectors since most of these schemes involve masking (encrypting) each entry of the vector sequentially. 
Therefore the \textbf{computation} time for both masking and unmasking always experiences a steep escalation with the increase in vector length, as PracAgg~\cite{Bonawitz2017PracticalSA} in Figure~\ref{ex0}, significantly constraining real-world applications. Especially in recent applications that utilize FL to fine tune \textit{Large Language Models} (LLMs)~\cite{ye2024openfedllm} with billions of parameters, the computational overhead brought by SAP is unbearable.

\begin{figure}[t!]
\centering
\includegraphics[width=\linewidth]{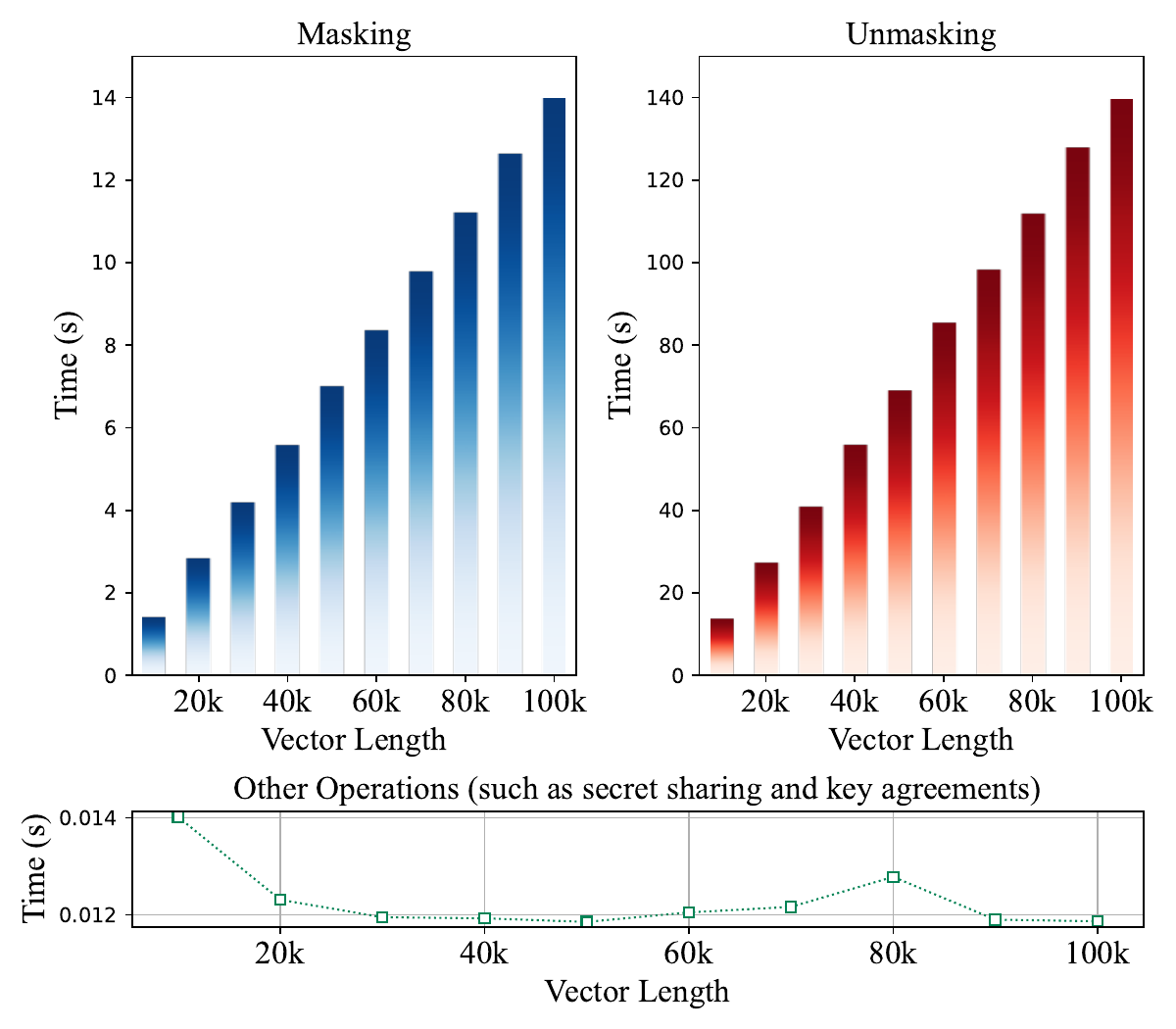}
\caption{Computation time of various operations in PracAgg for a single aggregation across different vector lengths, with 100 users and 10\% dropout rate.}
\label{ex0}
\end{figure}
On the other hand, DP-based solutions although have the best efficiencies, many studies~\cite{stevens2022efficient,wang2021protecting} posit that the minimal noise added by DP is insufficient to thwart attacks such as gradient inversion aimed at stealing users' local data, where adversaries can recover flawed but recognizable handwritten digit image~\cite{wang2021protecting}. Therefore the security of DP in secure aggregation faces challenges, necessitating its combination with masking to bolster privacy. In this work, we primarily explore methods to alleviate \textbf{masking-related} overhead.

\textbf{Our proposal.}
We think the root cause of computation overhead in SAPs is masking each entry of the original vector. While a few \textit{sparsification}-related approaches~\cite{Ergun2021SparsifiedSA,lu2023top} try to reduce the dimensions of uploaded vectors, they raise an inevitable trade-off of discarding some information. 
In this work, we propose \textbf{P}artial \textbf{V}ector \textbf{F}reezing (\textbf{PVF}) to reduce the number of entries processed in SAPs while ensuring intact aggregation of all entries in the original vector. Within the module, each user performs certain transformations on the original vector at a negligible computation cost to selectively \textit{freeze} most entries of the user’s original vector, compressing the length of the vector involved in SAPs to $\frac{1}{\lambda}$ of its original size ($\lambda$ is the compression factor). The communication overhead and number of communication interactions in SAPs after integrating PVF, which we call \textbf{$\bm{\lambda}$-SecAgg}, do not increase. Further, we propose \textit{Disrupting Variables Extension} to prevent PVF from leaking the linear relationship between vectors.
\sysname\ remains decoupled from SAPs and guarantees individual user privacy under semi-honest and active adversary settings, offering high portability. 
  

Our contributions can be summarized as follows:
\begin{itemize}[leftmargin=*]
    \item We propose the PVF without incurring additional communication overhead or harming security. It reduces the entries processed in SAP while \textbf{ensuring the aggregation of all entries in the original vector}, which means it can compress the computation overhead of SAP to approximately $\frac{1}{\lambda}$ of the original. Moreover, it brings up to $32.3\times~(\lambda=100)$ additional communication enhancements for HE-based SAPs thanks to the decreased number of ciphertext entries.
    \item We propose the disrupting variables extension to \sysname\ to avoid potential privacy leakage.
    \item Extensive experiments show the effectiveness of our proposal. We include $7$ baselines encompassing $5$ types of \textbf{masking schemes} for a comprehensive overhead comparison, which is largely unexplored in most research endeavors and reaffirms the high portability of \sysname.
\end{itemize}


\section{Related Work}
\label{sec:related}
\textbf{Secure Aggregation Protocols.} Various types of SAPs have been proposed, including SMPC-based~\cite{boer2020secure,xu2022non}, HE-based~\cite{aono2017privacy,ma2022privacy,Li2022EfficientPF}, DP-based~\cite{Geyer2017DifferentiallyPF,wei2020federated}, and Mask-based~\cite{Bonawitz2017PracticalSA} schemes. Most efforts to reduce computation cost focus on enhancing PracAgg~\cite{Bonawitz2017PracticalSA}, which is mainly categorized into two types: (i) improving the masking mechanism~\cite{Liu2022EfficientDA,stevens2022efficient,liu2022sash,Wei2023LightweightFL} to reduce \textbf{masking-related} overhead; (ii) minimizing \textbf{interaction-related} overhead, including refining communication structures~\cite{Bell2020SecureSA,so2021turbo} and enhancing efficiency in key agreements among users~\cite{kalikinkar2018nikebased,kadhe2020fastsecagg,Ma2023FlamingoMS}.
\textbf{Note} that the security of FL remains an open issue. SAPs, though cannot fully guarantee FL security at the moment~\cite{elkordy2023much}, remain a promising direction worth exploration. The main objective of our work is to reduce the masking-related overhead of secure aggregation, thereby making it more applicable in practice.

\textbf{Compression-based Techniques}. \cite{rothchild2020fetchsgd} employs a \textit{Count Sketch} to compress model updates. Additionally, some sparsification-based approaches~\cite{Ergun2021SparsifiedSA,lu2023top} can reduce vector dimensions. Our method fundamentally differs from these schemes, as our proposal compresses the entries involved in secure aggregation while retaining the \textbf{intact} aggregation result of all original entries.

\textbf{Defense Against Malicious Server}. Several works indicate the malicious server can launch \textit{Model Inconsistency Attacks}~\cite{pasquini2022eluding}, \textit{Multi-round Privacy Stealing Attacks}~\cite{so2023securing} and \textit{Aggregation Falsification Attacks}~\cite{Guo2021VeriFLCA}. These studies also propose strategies to counter these attacks accordingly, only requiring minor modifications to the SAP process, as described in Section~\ref{portability}.

\textbf{Input constraints}. Several works~\cite{bell2023acorn,lycklama2023rofl} are proposed to mitigate \textit{Poisoning Attacks} in FL. They delineate that the erroneous inputs of malicious users can result in the server obtaining an inaccurate global model, thereby harming the training task. They propose methodologies utilizing \textit{Zero-Knowledge Proofs} to bound user inputs. However, their ability to prevent poisoning attacks is limited~\cite{Ma2023FlamingoMS}. Establishing strong constraints against malicious inputs remains an unresolved challenge, and it falls beyond the scope of this work. What's more, \cite{Mozaffari2023EveryVC} propose \textit{Federated Rank Learning} (FRL), where the server aggregates the parameter rankings instead of the model parameter updates. It can effectively resist poisoning attacks, and enable direct aggregations without any constraints on user submissions. Therefore, FRL can be combined with SAP, and we do not have to worry about whether \sysname\ can be integrated with input constraints in this work.

\section{Cryptographic Primitives}
\subsection{Symmetric Authenticated Encryption}
\label{section:ae}
Symmetric authenticated encryption can ensure the confidentiality of a message, including:
\begin{itemize}[leftmargin=*]  
    \item $AE.gen(k)\rightarrow(sk)$, where $k$ is the security parameter. It outputs a secret key $sk$.
    \item $AE.enc(sk,m)\rightarrow(c)$. It encrypts the message $m$ using $sk$ and outputs the ciphertext $c$.
    \item $AE.dec( {sk,c} )\rightarrow m~or~\bot$. If $sk$ is the correct key corresponding to the ciphertext $c$ and $c$ passes integrity verification, it outputs the plaintext $m$. Otherwise, it outputs an error symbol.
\end{itemize}
We need the encryption scheme to be indistinguishable under chosen plaintext attacks (IND-CPA) and ciphertext integrity (IND-CTXT)~\cite{bellare2000authenticated}. 

\subsection{Digital Signature}
Digital signature can ensure the authenticity and integrity of a message. We use the signature scheme that achieves security against universal forgery under chosen message attack (UF-CMA). The digital signature scheme consists of:
\begin{itemize}[leftmargin=*]  
    \item $DS.gen(k)\rightarrow(sk,pk)$, where $k$ is the security parameter. It outputs a secret key $sk$ and a public key $pk$.
    \item $DS.sign(sk,m)\rightarrow(sig)$. It outputs a digital signature $sig$ on the message $m$. 
    \item $DS.verify( {sig,pk,m} )\rightarrow True~or~False$. It verifies whether the signature $sig$ is valid on $m$.
\end{itemize}
\subsection{Learning With Errors}
\label{subsec:lwe}
Given a finite field $\mathbb{F}_q$ and a discrete probability distribution $\mathcal{X}$ over $\mathbb{F}_q$. Let $\bm{s} \in \mathbb{F}_q^v$ be a secret vector, $\bm{A} \in \mathbb{F}_q^{u \times v}$ be a matrix that is chosen uniformly at random and $\bm{e} \in \mathbb{F}_q^u$ be the error vector that is sampled from $\mathcal{X}$. $(v,q,\sigma)$ parameterize an LWE instance, where $\sigma$ is the standard deviation of $\mathcal{X}$. The Learning With Errors (LWE) (search) problem is to find $\bm{s}$, given the pair $(\bm{A}, \bm{b})$, where $\bm{b} = \bm{A} \bm{s} + \bm{e}$. And the LWE decision problem is to distinguish between two uniformly randomly generated pairs.
\cite{regev2009lattices} shows that if the size of $q$ is polynomial in $v$ and $\mathcal{X}$ is a discrete Gaussian distribution on $\mathbb{F}_q$ with standard deviation $\sigma > 2\sqrt{v}$, the LWE decision problem is at least as hard as the LWE search problem and solving the LWE search problem can be reduced to solving the Shortest Vector Problem, a well-known NP-hard problem. In this work, $v=\lambda$, and we use $\mathbb{Z}_p$ as $\mathbb{F}_q$. 
\section{Partial Vector Freezing}
\label{sec:method}
\textbf{Scenario.}
In the $t$-th round of FL, the user set $\mathcal{U}=\{u_1,\ldots,u_n \}$ conduct local model training and submit model updates $\{\bm{x}^{i(t)}\}_{i\in \mathcal{U}} = \{(x^{i(t)}_1;\ldots;x^{i(t)}_m)\}_{i\in \mathcal{U}}$ (\textit{Original Vectors}) to the server $\mathcal{S}$. There might be $\eta~(\le 30\%)$ users that drop out during the aggregation due to network instability or other reasons and $\mathcal{U}'$ is the surviving user set. $\mathcal{S}$ aggregates the model updates to compute $\sum_{i\in \mathcal{U}'} \bm{x}^{i(t)}$ and redistributes the result to all users (\textit{Plain Aggregation}). This iterative process continues until the completion of model training. SAPs can help obtain $\sum_{i\in \mathcal{U}'} \bm{x}^{i(t)}$ while ensuring the privacy of each individual $\bm{x}^{i(t)}$. Similar to other SAPs~\cite{Bonawitz2017PracticalSA, stevens2022efficient}, we define the elements of $\bm{x}^{i(t)}~(i\in [1,n])$ within $\mathbb{Z}_p$ for some large public prime $p$ and assume there is a secure communication channel between each user and $\mathcal{S}$. In this section, our emphasis lies in introducing the computation methodology of \sysname, specifically considering a single-round aggregation process with superscript ``$^{(t)}$'' omitted. To ensure multi-round privacy, employing a specialized user selection mechanism for each round is sufficient~\cite{Liu2023LongTermPA, so2023securing}. For the summary of notations, please refer to \cref{sec:notation}.

\textbf{Threat model}: Corrupt participants endeavor to infer the privacy of honest parties based on the messages they receive, i.e., the \textit{Semi-honest Model}, and can fabricate messages, i.e., the \textit{Active Adversary Model}. We assume malicious users do not exceed one-third of the total users, aligned with PracAgg~\cite{Bonawitz2017PracticalSA} and EffiAgg~\cite{Liu2022EfficientDA}. The lenient security assumptions of \sysname\ allow its easy integration with secure aggregation protocols. When integrating with a SAP, the security assumptions originally employed in the corresponding protocol are adopted and we assume the integrated SAP can \textbf{reliably ensure the privacy of inputs}.

\subsection{Motivation}
Within SAP, every minor operation on an entry accumulates $m$ times, ultimately imposing significant computational burdens on devices. For example, $u_i$ in PracAgg needs to perform the following calculations on each entry $x^{i}_j$ of $\bm{x}^i$ ($b$ and $s$ are user secret keys, PRG is a pseudorandom number generator):
\begin{equation}
    y^i_j=x^i_j  +\operatorname{PRG}(b_i) + \sum_{h \in \mathcal{U}: i<h} \operatorname{PRG}(s_{i, h}) - \sum_{h \in \mathcal{U}: i>h} \operatorname{PRG}(s_{h, i}).
\end{equation}
Based on these observations, we try to reduce the number of entries processed in secure aggregation while ensuring users receive all entries of the aggregated vector. To accomplish this objective, our mindset is to devise a module that can freeze certain entries (\textit{Frozen Entries}) of the user's original vector, and only perform secure aggregation on the other entries (\textit{SecAgg Entries}).  
Upon SAP completion, this module thaws the frozen entries, ensuring that their privacy is well protected throughout the entire process. 

\begin{definition}
Given an invertible $\lambda \times \lambda$ matrix $\bm{A}$ and a segment of original vector $\bm{x}=(x_{1};\ldots;x_{\lambda})$, we define \textit{Incomplete Matrix} $\bm{\check{A}}=\bm{A}_{:\lambda-1, :}$ and \textit{Residual Vector} $\bm{\alpha}=\bm{A}_{\lambda, :}$. The function for finding the solution set of a system of linear equations is:
\begin{equation}
    SLE_{AK}(\bm{Ax})\rightarrow \bm{x},
\end{equation}
where $AK$ denotes the additional knowledge. We use $rank(\cdot)$ represents the rank of a matrix. Since $rank(\bm{A}) = rank(\bm{A,Ax})=\lambda$, the system of linear equations has a unique solution $\bm{x}$. However, due to $rank(\bm{\check{A}}) = rank(\bm{\check{A},\check{A}x})<\lambda$, the system has an infinitude of solutions~\cite{suetin1989linear}, also called an \textit{Under-determined System of Linear Equations}. 
\end{definition}

It can be seen that when $\bm{\check{A}}$ and $\bm{\check{A}x}$ are known, in the absence of knowledge about $\bm{\alpha x}$, $\bm{x}$ presents an infinite set of possibilities, rendering it impossible to determine that specific confidential vector. Motivated by this, we propose \sysname.

\begin{figure}[t!]
\centering
\includegraphics[width=\linewidth]{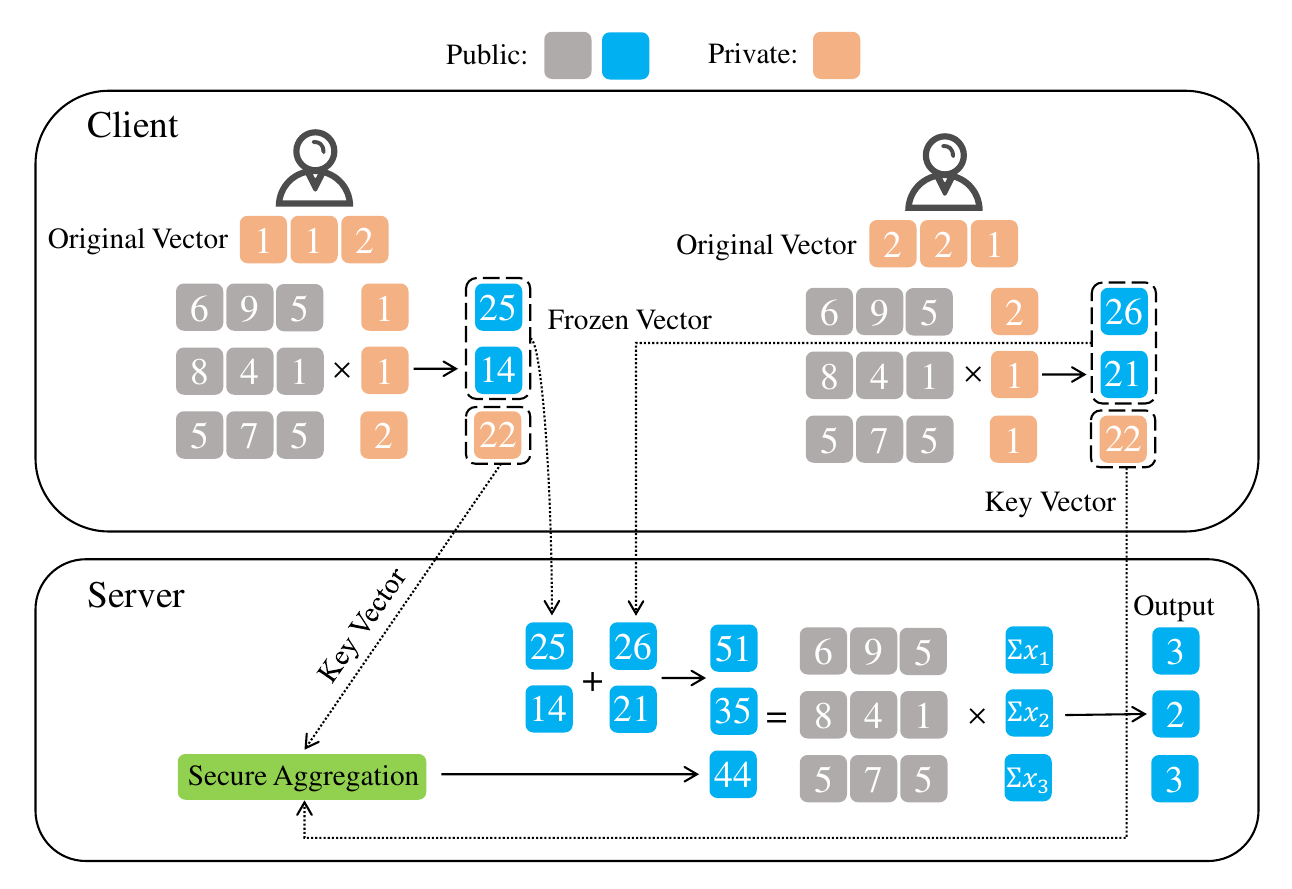}
\caption{Workflow of $\lambda$-SecAgg with Main \sysname.}
\label{fig:workflow}
\end{figure}

\subsection{Main Method}
\label{mainmethod}
In this section, we present the computation process of the Main \sysname\ module during a single aggregation round, depicted by the workflow shown in Figure \ref{fig:workflow}. 

\textbf{Phase 0: Main.Setup($\cdot$)}. Generate an invertible matrix $\bm{A}\in \mathbb{Z}_p^{\lambda \times \lambda}$ randomly, and obtain $\bm{A^{-1}}$, $\bm{\check{A}}$, $\bm{\alpha}$, which are all \textbf{public} parameters. We refer to $\lambda$ as the \textit{Compression Factor}.

\textbf{Phase 1: Main.Freeze($\cdot$)}. For $i\in [1,n]$, randomly pad $\bm{x}^{i}$ to ensure the length of the padded vector is a multiple of $\lambda$. The padded vector is $\bm{x}^{i}_{pad}=(x^{i}_1;x^{i}_2;\ldots;x^{i}_{m'}),$
where $m'=l\lambda$. Then divide $\bm{x}^{i}_{pad}$ into $l$ groups: $\bm{x}^{i}_{pad}=(\bm{d}^{i}_1;\bm{d}^{i}_2; \ldots;\bm{d}^{i}_l)$, where $\bm{d}^{i}_j=(x^{i}_{(j-1) \lambda +1};x^{i}_{(j-1) \lambda +2};\ldots;x^{i}_{j \lambda })$.
Use the incomplete matrix $\bm{\check{A}}$ to compute \textit{Frozen Vector}:
\begin{equation}
    \label{eq:y}
    \begin{aligned}
\bm{y}^{i}& =\left(\bm{y}^{i}_1,\ldots,\bm{y}^{i}_l\right) \\
&=\left(\left(y^{i}_{1},\ldots,y^{i}_{\lambda-1}\right),\ldots,\left(y^{i}_{(l-1)\lambda+1},\ldots,y^{i}_{(l-1)\lambda+\lambda-1}\right)\right)  \\
& =\left(\bm{\check{A}}\bm{d}^{i}_1,\ldots,\bm{\check{A}}\bm{d}^{i}_l\right),
    \end{aligned}
\end{equation}
and use the residual vector to compute \textit{Key Vector}:
\begin{equation}
    \label{eq:k}
        \bm{k}^{i} =\left(k^{i}_1,\ldots,k^{i}_l\right)
        =\left(\bm{\alpha}\bm{d}^{i}_1,\ldots,\bm{\alpha}\bm{d}^{i}_l\right)
         .
\end{equation} 

\textbf{Phase 2: Main.SecAgg($\cdot$)}. Users and $\mathcal{S}$ execute SAP, where the vector to be aggregated of user $i$ is $\bm{k}^{i} =(k^{i}_1,k^{i}_2,\ldots,k^{i}_l) \in \mathbb{Z}_p^l$. We require all users to send their respective $\bm{y}^{i}$ to $\mathcal{S}$ during SAP, thus \textbf{eliminating the need for additional interactions}. Upon completion of SAP, the surviving user set $\mathcal{U}'$ and $\mathcal{S}$ can receive the aggregated result of the key vectors: $\sum_{i\in\mathcal{U}'} \bm{k}^{i}$ or $Enc(\sum_{i\in\mathcal{U}'} \bm{k}^{i})$ (in HE-based SAPs). Then $\mathcal{S}$ computes $\sum_{i\in \mathcal{U}'} \bm{y}^{i}$, immune to the impact of dropout users. 

\textbf{Phase 3: Main.Thaw($\cdot$)}. Thawing can be executed either at the server or user side, without compromising privacy and incurring any additional communication:

(1) \textit{Thawing on the server side}. Based on the acquired $\sum_{i\in\mathcal{U}'} \bm{k}^{i}$ and $\sum_{i\in\mathcal{U}'} \bm{y}^{i}$, $\mathcal{S}$ can derive the aggregated result $\bm{sum}$. The correctness stems from the linearity of $\bm{A}$ and $\bm{\alpha}$:
\begin{equation}
\label{eq:z}
    \resizebox{0.9\columnwidth}{!}{$\begin{aligned}
        \bm{z} & = \left(\left(\sum_{i\in \mathcal{U}'}\bm{y}^{i}_1,\sum_{i\in \mathcal{U}'}k^{i}_1\right),\ldots,\left(\sum_{i\in \mathcal{U}'}\bm{y}^{i}_l,\sum_{i\in \mathcal{U}'}k^{i}_l\right)\right) \\
        & = \left(\left(\sum_{i\in \mathcal{U}'}\bm{\check{A}}\bm{d}^{i}_1,\sum_{i\in \mathcal{U}'}k^{i}_1\right),\ldots,\left(\sum_{i\in \mathcal{U}'}\bm{\check{A}}\bm{d}^{i}_l,\sum_{i\in \mathcal{U}'}k^{i}_l\right)\right) \\
                   & = \left(\sum_{i\in \mathcal{U}'}\bm{A}\bm{d}^{i}_1,\ldots,\sum_{i\in \mathcal{U}'}\bm{A}\bm{d}^{i}_l\right)   \\
                   &= \left(\bm{A}\sum_{i\in \mathcal{U}'}\bm{d}^{i}_1,\ldots,\bm{A}\sum_{i\in \mathcal{U}'}\bm{d}^{i}_l\right)                                                                                                                
    \end{aligned}$}.
\end{equation}
Since $\bm{A^{-1}}$ and $\bm{z}$ are public, $\mathcal{S}$ can thaw frozen vectors and compute the aggregated results of all entries in the original vector by:
\begin{equation}
    \label{eq:lineareqs}
    \begin{aligned}
         & \sum_{i\in \mathcal{U}'}\bm{d}^{i}_1=\bm{A^{-1}}\bm{z}_1=\bm{A^{-1}}\bm{A}\sum_{i\in \mathcal{U}'}(x^{i}_{1};x^{i}_{2};\ldots;x^{i}_{\lambda }),                                 \\
         & \vdots                                                                                                                                                                                                  \\
         & \sum_{i\in \mathcal{U}'}\bm{d}^{i}_l=\bm{A^{-1}}\bm{z}_l=\bm{A^{-1}}\bm{A}\sum_{i\in \mathcal{U}'}(x^{i}_{(l-1) \lambda +1};x^{i}_{(l-1) \lambda +2};\ldots;x^{i}_{l \lambda }).
    \end{aligned}
\end{equation}
At this point, $\mathcal{S}$ completes the thawing phase and obtain:  
\begin{equation}
\begin{aligned}
            \bm{sum} &=\left(\sum_{i\in \mathcal{U}'}\bm{d}^{i}_1;\ldots;\sum_{i\in \mathcal{U}'}\bm{d}^{i}_l\right) = \left(\sum_{i\in \mathcal{U}'}{x}^{i}_1;\ldots;\sum_{i\in \mathcal{U}'}{x}^{i}_{l\lambda}\right) \\
        &= \sum_{i\in \mathcal{U}'}\bm{x}^{i}_{pad}.
\end{aligned}
\end{equation}
Subsequently, it transmits $\bm{sum}$ to all online users. Upon receiving $\bm{sum}$, users can obtain the final aggregated result by removing the padding.

(2) \textit{Thawing on the user side}. In certain SAPs~\cite{aono2017privacy, xu2022non}, the aggregated results remain invisible to $\mathcal{S}$ to ensure the protection of users' intellectual property, among other goals. In such situations, $\mathcal{S}$ cannot perform the thawing. Instead, $\mathcal{S}$ sends $\sum_{i\in \mathcal{U}'}\bm{y}^{i}$ and $Enc(\sum_{i\in\mathcal{U}'} \bm{k}^{i})$ to all surviving users. Users locally decrypt $Enc(\sum_{i\in\mathcal{U}'} \bm{k}^{i})$ and perform the thawing process to obtain $\bm{sum}$. By removing the padding, users attain the final aggregated result.

The privacy of $\bm{x}^i$ in the Main PVF is primarily safeguarded by the hardness of determining a specific solution to an under-determined system of linear equations.

\subsection{Disrupting Variables Extension}
\begin{figure*}[t!]
    \begin{center}
        \resizebox{\linewidth}{!}{\begin{tabular}{|l|}
                \hline
                \multirow{2}* {\centerline{\textbf{\quad \quad \quad \quad \quad \quad \quad \quad \quad \quad $\lambda$-SecAgg Protocol}} }                                                                                                                                                                                           \\
                {}                                                                                                                                                                                                                                                           \\
                \textbf{Participants}: $\mathcal{S}$ and User set $\mathcal{U}=\{u_1,u_2,\ldots,u_n\}$.                                                                                                                                                                      \\
                \textbf{Public Inputs}: $\bm{A}$, $\mu$, $\check{\bm{A}}$, $\bm{\alpha}$, $\lambda$, $\mathbb{Z}_p$, $g$ and $h$. Users' public keys for signatures $\{ sig_i^{pk}\}_{i \in \mathcal{U}}$ and the server's public key for signatures $sig_S^{pk}$.\\
                \textbf{Private Inputs}: Original vectors $\{ \bm{x}^{i(t)}\}_{i \in \mathcal{U}}$ of $t$-th iteration. Users' secret keys for signatures $\{ sig_i^{sk}\}_{i \in \mathcal{U}}$ and the server's secret key for signatures $sig_S^{sk}$.                                                                                                                                                 \\
                \textbf{Outputs}: Surviving user set $\mathcal{U}'$, $\sum_{i\in \mathcal{U}'} \bm{x}^{i(t)}$.                                                                                                                                                           \\
                $~\bullet~$\textbf{Phase 1  Freezing}                                                                                                                                                                                                                        \\
                \quad User $i \in \mathcal{U}$:                                                                                                                                                                                                                              \\
                \quad\quad - pad $\bm{x}^{i(t)}$ randomly and group the entries.                                                                                                                                           \\
                \quad\quad - add noise to $\bm{x}^{i(t)}$ via \cref{eq:noise}.                                                                                                                                           \\
                                \quad\quad - calculate key vector $\bm{k}^{i(t)}$ via \cref{eq:k_sec}.                                                                                                                                                                                     \\
                \quad\quad - calculate frozen vector $\bm{y}^{i(t)}$ via \cref{eq:y}.                             \\
                \quad\quad - obtain $m_1^i=$ $\bm{y}^{i(t)}$, send $\sigma_{1}^i \rightarrow DS.sign( {{sig}_{i}^{sk},m_1^i} )$ to $\mathcal{S}$.                                                                                                                                      \\
                $~\bullet~$\textbf{Phase 2  SecAgg}                                                                                                                                                                                                                          \\
                \quad $\mathcal{S}$ and Users:                                                                                                                                                                                                                               \\
                \quad\quad - execute \textbf{SAP} for $\{\bm{k}^{i(t)}\}_{i\in \mathcal{U}}$. \\
                \quad\quad - all participants receive $\mathcal{U}'$ and $\sum_{i\in\mathcal{U}'}\bm{k}^{i(t)}$ (or $Enc(\sum_{i\in\mathcal{U}'}\bm{k}^{i(t)})$).                               \\

                \quad $\mathcal{S}$:                                                                                                                                                                                                                                         \\
                \quad \quad - if $DS.verify( {\sigma_{1}^i, {sig}_{i}^{pk},m_1^i} )\rightarrow False$, abort. Otherwise, calculate $\sum_{i\in \mathcal{U}'}\bm{y}^{i(t)}$.                                                                                        \\
                $~\bullet~$\textbf{Phase 3  Thawing}                                                                                                                                                                                                                         \\
                \quad \textit{Thawing on the server side}                                                                                                                                                                                                                    \\
                \quad $\mathcal{S}$:                                                                                                                                                                                                                                         \\
                                \quad\quad - calculate $\bm{sum}=\sum_{i\in \mathcal{U}'}\bm{x}^{i(t)}_{pad}$ via \cref{eq:lineareqs} and send $\bm{sum}$ and $\sigma_{3} \rightarrow DS.sign( {{sig}_{S}^{sk},\bm{sum}} )$ to $i\in \mathcal{U}'$.                                                                                      \\
                \quad User $i \in \mathcal{U}'$:                                                                                                                                                                                                                             \\
                \quad\quad - receive $\sum_{i\in \mathcal{U}'}\bm{x}^{i(t)}$, if $DS.verify( {\sigma_{3}, {sig}_{S}^{pk},\bm{sum}} )\rightarrow False$, abort. Otherwise, unpad and output.                                                                                                                                                                                  \\
                \quad \textit{Thawing on the user side}                                                                                                                                                                                                                      \\
                \quad $\mathcal{S}$:                                                                                                                                                                                                                                         \\
                \quad\quad - send $\sum_{i\in \mathcal{U}'}\bm{y}^{i(t)}$, $Enc(\sum_{i\in\mathcal{U}'}\bm{k}^{i(t)})$ and $\sigma_{3} \rightarrow DS.sign( {{sig}_{S}^{sk},\sum_{i\in \mathcal{U}'}\bm{y}^{i(t)}}|| Enc(\sum_{i\in\mathcal{U}'}\bm{k}^{i(t)}))$ to $i \in \mathcal{U}'$.                                                                                                                   \\
                \quad User $i \in \mathcal{U}'$:                                                                                                                                                                                                                             \\
                \quad\quad - if $DS.verify( {\sigma_{3}, {sig}_{S}^{pk},\sum_{i\in \mathcal{U}'}\bm{y}^{i(t)}}|| Enc(\sum_{i\in\mathcal{U}'}\bm{k}^{i(t)} )\rightarrow False$, abort.                                                                                                    \\
                \quad\quad - decrypt $Enc(\sum_{i\in\mathcal{U}'}\bm{k}^{i(t)})$.                                                                                                                                                                                        \\
                \quad\quad - calculate $\bm{sum}=\sum_{i\in \mathcal{U}'}\bm{x}^{i(t)}_{pad}$ via \cref{eq:lineareqs}, unpad and output.                                                                                                              \\
                {}                                                                                                                                                                                                                                                           \\
                \hline
            \end{tabular}}
    \end{center}
    \caption{The pipline of $\lambda$-SecAgg for one aggregation.}
    \label{fig:protocol}
\end{figure*}

\label{extension:DVE}
While the server cannot obtain any individual element of $\bm{x}^i$ within Main PVF, it still obtains certain \textbf{linear relationships} involving the private entries. In this section, we present improvements to the Main PVF to ensure that the server cannot obtain any information about $\bm{x}^i$ from $\bm{y}^i$.


Similar to many hybrid schemes combining mask and DP~\cite{Bonawitz2017PracticalSA,stevens2022efficient,liu2022sash} (or encryption and DP~\cite{wang2021protecting}), DVE adds noise to $\bm{x}^i$ before aggregation to enhance privacy by: 
\begin{equation}
\label{eq:noise}
    \bm{x}^i = \bm{x}^i + \bm{e},
\end{equation}
where the Gaussian noise $\bm{e} \sim \mathcal{N}(0, \bm{\Sigma} _ e=\bm{A}^{-1}\bm{D}\bm{A}^{-T})$ and $\bm{\Sigma} _ e$ is the covariance matrix of $\bm{e}$ and $\bm{D}$ is a diagonal matrix. 
Then $\bm{y}^i=\check{\bm{A}}(\bm{x}^i+ \bm{e})=\check{\bm{A}}\bm{x}^i+ \bm{e'}$. $\bm{e}'$ follows a multivariate normal distribution, with its mean and covariance matrix given by:

\begin{equation}
\label{eq:eplus-mean}
\mathbb{E}[\bm{e}'] = \mathbb{E}[\bm{A}\bm{e}] = \bm{A}\mathbb{E}[\bm{e}] = \bm{A} \cdot 0 = 0,
\end{equation}
\begin{equation}
\label{eq:eplus-var}
\text{Cov}(\bm{e}') = \bm{A}\text{Cov}(\bm{e}) \bm{A}^T = \bm{A} \Sigma _ e \bm{A}^T= \bm{A} \bm{A}^{-1}\bm{D}\bm{A}^{-T} \bm{A}^T=\bm{D},
\end{equation}
which means each component of $\bm{e}'$ \textbf{independently} follows a Gaussian distribution. Each $e$ follows a different Gaussian distribution does not affect the hardness of the LWE problem. For instance, consider LWE samples $<\bm{a}, \bm{ax} + e_1>$ and $<\bm{a}, \bm{ax} + e_2>$. They can be viewed as belonging to different LWE instances, which does not compromise security.

Therefore, given a uniformly random vector $\bm{w}^i$, Sec. \ref{subsec:lwe} ensures that $(\check{\bm{A}}, \bm{y}^i)$ and $(\check{\bm{A}}, \bm{w}^i)$ are indistinguishable, which guarantees \textbf{$\mathcal{S}$ does not obtain private information from honest users through frozen vectors.}
For clarity and conciseness, we do not differentiate the symbols of the original vector before and after adding noise. 

The standard deviation of the noise added to $x_j$ is $\sigma _ j = (\bm{A}^{-1}\bm{D}\bm{A}^{-T}) _ {jj} = \sigma _ j' \sum _ {k=1}^\lambda (\bm{A}^{-1}) _ {jk}^2$, where $\sigma _ j'$ is the $j$-th diagonal element of the $\bm{D}$. Since $p$ is a prime number, i.e., $\gcd(\sum _ {k=1}^\lambda (\bm{A}^{-1}) _ {jk}^2, p) = 1$, the map $\sigma _ j' \mapsto \sigma _ j' \sum _ {k=1}^\lambda (\bm{A}^{-1}) _ {jk}^2 \mod p$ is a bijective mapping from $\mathbb{Z} _ p$ to $\mathbb{Z} _ p$. $\sigma _ j'$ must satisfy $\sigma _ j' > 2\sqrt{\lambda}$, which means that the range of invalid values for $\sigma _ j'$ is extremely small. Therefore, $\sigma_j' \sum_{k=1}^\lambda (\bm{A}^{-1})_{jk}^2$ can basically take all values of $\mathbb{Z} _ p$.
It is evident that we can always choose a $\sigma _ j'$ such that
\begin{equation}
\label{eq:sigma}
\sigma _ j' \sum _ {k=1}^\lambda (\bm{A}^{-1}) _ {jk}^2 \mod p < \epsilon,
\end{equation}
where $\epsilon$ is a small number representing the maximum noise that can be added to $x _ j$.

For example, in Figure \ref{fig:dp}, the range of invalid values for $\sigma _ i'$ is $[0, 2\sqrt{1000} = 63]$, meaning that there are only 64 specific values in $\mathbb{Z} _ p$ that $\sigma _ i' \sum _ {k=1}^\lambda (\bm{A}^{-1}) _ {ik}^2$ cannot take. Since $\epsilon = 8783$ in Figure \ref{fig:dp} still ensures that the added noise is negligible and there are only 64 values that cannot be taken, $\sigma _ i' \sum _ {k=1}^\lambda (\bm{A}^{-1}) _ {ik}^2 \mod p$ can always take values within the range $[0, 8783]$. Thus, we can always construct a $\bm{D}$ that not only ensures the noise added to $\bm{x}$ is negligible but also guarantees that the components of $\bm{e}'$ in the LWE instance are independent of each other and $\sigma _ i' > 2\sqrt{\lambda}$.

\subsection{Integrating \sysname\ with Different SAPs}
\label{portability}
To resist \textbf{model inconsistency attacks}, appending the hash of the received model to the pseudorandom generator seed is sufficient, without incurring additional overhead~\cite{Ma2023FlamingoMS}. And utilizing an innovative user selection mechanism~\cite{so2023securing,Liu2023LongTermPA} is able to achieve \textbf{multi-round privacy}. To resist \textbf{aggregation falsification attacks}, verifiable protocols~\cite{Hahn2023VerSAVS} are able to verify the aggregation results through commitments sent by $\mathcal{S}$, and we provide \textit{Result Verification Extension} in Appendix~\ref{ex:rve} to enable PVF to integrate with such SAP.  
The pipeline of $\lambda$-SecAgg is shown in Figure \ref{fig:protocol}. We omit the encryption of messages before transmission and the decryption after reception by each participant. If any error occurs during encryption or decryption process, the protocol will be immediately terminated. For the detailed portability analysis, please refer to Appendix~\ref{app:port}.

\section{Security Analysis}
\label{sec:analysis}
Evidently, the information that adversaries can obtain about an honest participant only includes $\bm{sum}$ and under-determined systems of linear equations ($\bm{y}^{i}$). Randomly generating $\bm{A}$ and performing certain \textbf{pre-checks} (as \cref{security_single}), the under-determined systems of linear equations can effectively preserve the privacy of each entry, which is also adopted by \cite{Liu2023LongTermPA}. And in \cref{security_main}, we demonstrate adversaries cannot access $\bm{x}^{i}$ throughout $\lambda$-SecAgg process.
\subsection{Privacy of Each Element}

\label{security_single}
In cases of improper selection of $\bm{A}$, $\mathcal{S}$ can access the privacy of some specific elements within an original vector (if without DVE), as illustrated in Example~\ref{example:sec}. 
In the implementation, we initially generate $\bm{A}$ randomly, and we transform $\check{\bm{A}}$ into \textit{Reduced Row Echelon Form} and verify that no element can be deduced. This ensures privacy of every element. Unless specified otherwise, all subsequent references to $\bm{A}$ in the following text are designed to guarantee each element privacy.

\begin{empheqboxed}
\begin{example}
\label{example:sec}
Consider a $3\times 3$ matrix:
\begin{equation}
    \bm{A}=\begin{bmatrix}
        1 & 2 & 3 \\
        1 & 3 & 3 \\
        1 & 2 & 4
    \end{bmatrix},
\end{equation}
which is an invertible matrix. The corresponding incomplete matrix is:
\begin{equation}
    \check{\bm{A}}=\begin{bmatrix}
        1 & 2 & 3 \\
        1 & 3 & 3
    \end{bmatrix}.
\end{equation}
Assume the original vector is $\bm{x}=(x_1;x_2;x_3)=(1;2;3)$, and the frozen vector is $\check{\bm{A}}\bm{x}=(14,16)$.
$\mathcal{S}$ obtains the under-determined system of linear equations as follows:
\begin{equation}
    \left\{\begin{array}{l}
        x_1+2x_2+3x_3=14 \quad(i)\\
        x_1+3x_2+3x_3=16 \quad(ii)
    \end{array}\right.
\end{equation}
\end{example}
\end{empheqboxed}
While $\mathcal{S}$ cannot obtain the complete $\bm{x}$, it can deduce $x_2=2$ by $(ii)-(i)$. Unlike $\mathcal{S}$ obtaining prior knowledge of elements through attacks as discussed in Section~\ref{securityex}, here, $\mathcal{S}$ deduces $x_2=2$ through the computation process within \sysname. However, since $\bm{A}$ is public, any maliciously constructed $\bm{A}$ can be easily detected by honest users.

\subsection{Security Analysis of Entire Vectors}
\label{security_main}
First and foremost, it is evident that in PVF, the user only transmits $\mathbf{y}^i$ to the server, and the server only sends $\sum \mathbf{y}^i$ back to the users after SAP ends. Notably, $\mathbf{k}^i$ and $\sum \mathbf{k}^i$ are transmitted through the SAP, independent of PVF. In the active adversary model, we obviously cannot guarantee the correctness of the aggregation result because the malicious server can arbitrarily modify the result. However, we can guarantee the privacy of honest users’ inputs. We provide a detailed explanation of the active attacks that malicious participants can launch within PVF and how PVF leverages the cryptographic primitives to defend against them.
\begin{itemize}[leftmargin=*]
    \item Forging fake users to participate in PVF. This type of attack, also known as a \textit{Sybil Attack}, involves fake users reporting received information to the server. Such attacks primarily target scenarios where users share secret keys among themselves but keep the keys secret from the server, like PPDL. Alternatively, an attacker may attempt to forge a large number of fake users (more than $\frac{1}{3}|\mathcal{U}|$) to reconstruct users’ private keys in the secret-sharing scheme. Since PVF does not involve information that is kept secret from the server but shared among all users, and consistent with the assumption in PracAgg that the number of malicious users does not exceed $\frac{1}{3}|\mathcal{U}|$, PVF is resistant to this type of attack.
    \item Attempting to forge or tamper with honest users' messages. Such attacks may occur in PVF in the following situations: malicious participants forging or tampering with an honest user's $\mathbf{y}^i$. This can be avoided by the digital signature $\sigma_{1}^{i}$ employed in PVF. Similarly, malicious participants may attempt to forge or tamper with $\sum \mathbf{y}^i$ sent by the server, which is prevented by the use of $\sigma_{3}$. 
    \item Sending malformed messages. In PVF, such attacks include malicious users sending malformed ciphertexts of $\mathbf{y}^i$ or the malicious server sending malformed ciphertexts of $\sum \mathbf{y}^i$. Such attacks are prevented by the IND-CPA and IND-CTXT security of the symmetric authenticated encryption used in PVF. If decryption fails, the protocol is immediately terminated. 
    \item Intercepting and stealing private information. Malicious adversaries may intercept messages sent by honest users to extract private information. This is effectively avoided by the symmetric authenticated encryption employed in PVF.
\end{itemize}
The use of symmetric authenticated encryption and digital signatures to ensure privacy under the active adversary model is a relatively mature application in the field of secure aggregation, and our design follows these existing works. 
Then we present the following lemmas:
\begin{lemma}[Privacy during the freezing phase]
    \label{ana:lemma1}
     Fix $p$, $\mathcal{U}$, $m$, $\lambda$, $\bm{A}$ and a private vector $\bm{x}^{i} = (\bm{d}^{i}_1,\ldots,\bm{d}^{i}_{ \lceil\frac{m}{\lambda} \rceil})$ (with noise) of an honest user $i \in \mathcal{U}$. For any probabilistic polynomial-time (PPT) adversary $M$ who is given $\{\bm{y}^{i}\}_{i \in \mathcal{U}}$ and $\mathcal{C}$, the advantage of M to obtain any unbroken element $a_j$ is defined as:
    \begin{equation}
        Adv_{M}^y(\lambda):=Pr[SLE_{\mathcal{C}}(\check{\bm{A}}\bm{d}^{i}_j)\rightarrow a_j]_{j\in [1,\lceil\frac{m}{\lambda}   \rceil]}.
    \end{equation}
There exists a negligible function $\varepsilon$ such that $Adv^y_{M}(\lambda) \le \varepsilon$.
\end{lemma}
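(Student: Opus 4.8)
The plan is to show that the adversary's entire view about a fixed honest user's group $\mathbf{d^{i(t)}_j}$ reduces to an underdetermined linear system and that no PPT strategy can single out an unbroken coordinate from its solution set. First I would fix the group index $j$ and collect everything the adversary $M$ sees that is relevant to $\mathbf{d^{i(t)}_j}$: the frozen vector block $\mathbf{y^{i(t)}_j} = \check{\mathbf{A}}^{\delta+1}_{\lambda}\mathbf{d^{i(t)}_j}$ (the only part of $\{\mathbf{y^{i(t)}}\}_{i\in\mathcal{U}}$ that depends on $\mathbf{d^{i(t)}_j}$, since the honest user's contribution is otherwise masked by SAP in the key-vector channel), the $\delta$ broken coordinates in $\mathcal{E}_j$, and the public matrices. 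Substituting the known coordinates $\mathcal{E}_j$ into the system $\check{\mathbf{A}}^{\delta+1}_{\lambda}\mathbf{d^{i(t)}_j} = \mathbf{y^{i(t)}_j}$ yields a linear system in the $\lambda-\delta$ unknown coordinates with only $\lambda-\delta-1$ equations (the number of rows of $\check{\mathbf{A}}^{\delta+1}_{\lambda}$).

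Next I would argue that this reduced system has rank exactly $\lambda-\delta-1$ in the unknowns — this follows from the construction of $\mathbf{A}_{\lambda}$ (invertible, hence every set of rows is linearly independent) together with the individual-element-privacy refinement of Sec.~\ref{security_single}, which guarantees that after row reduction no row of the relevant submatrix collapses to pin down a single variable. Consequently the solution set is an affine line (a one-parameter family) over $\mathbb{Z}_p$, so for any target unbroken coordinate $a_j$ the number of consistent values it can take is exactly $p$ (or $p$ minus a bounded constant if individual-element-privacy forces the coefficient of the free parameter in that coordinate to be nonzero, which is exactly what that subsection ensures). Since the true $\mathbf{d^{i(t)}_j}$ is one point on this line and the adversary has no further information distinguishing points, any guess succeeds with probability at most $1/(p-c)$ for a small constant $c$; taking a union bound over the $\lceil m/\lambda\rceil$ groups gives $Adv^y_{M,\delta}(\lambda) \le \lceil m/\lambda\rceil/(p-c)$, which is negligible in the security parameter since $p$ is exponentially large and $m,\lambda$ are polynomially bounded.

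I would make this precise with a short information-theoretic framing: condition on all public parameters and on $\mathcal{E}_j$, and observe that the conditional distribution of any unbroken coordinate (under a uniform prior on the fiber, or more carefully via a simulation argument matching SAP's own simulator for the key-vector channel) is uniform over a set of size $\ge p-c$; a PPT adversary cannot beat the optimal Bayes guesser, so its advantage is bounded by the collision probability of that distribution. The one subtlety to handle carefully is the interaction with SAP: I must invoke the security of the underlying protocol to assert that the key vectors $\{\mathbf{k^{i(t)}}\}$ leak nothing about the honest user's $\mathbf{d^{i(t)}_j}$ beyond the aggregate — this is where the "adopt the security assumptions of the integrated protocol" clause of the threat model is used, and it lets me treat the frozen-vector channel in isolation.

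\textbf{Main obstacle.} The delicate point is not the linear algebra but the reduction to SAP security: I need to rule out that combining the $\delta$ broken coordinates, the frozen block $\mathbf{y^{i(t)}_j}$, and whatever partial information SAP's simulator allows about $\sum_i \mathbf{k^{i(t)}}$ could jointly determine an unbroken coordinate even though $\mathbf{y^{i(t)}_j}$ alone cannot. Handling this cleanly requires a hybrid argument that replaces the real SAP transcript with its simulator output and then shows the residual adversary is reduced to exactly the underdetermined system above; the rank condition $\delta < \lambda-1$ is precisely what keeps that system underdetermined (at least one free parameter) after substitution.
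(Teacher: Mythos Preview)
Your proposal is correct and follows essentially the same route as the paper: reduce the adversary's view on a block $\mathbf{d^{i(t)}_j}$ to an underdetermined linear system with $\lambda-\delta-1$ equations in $\lambda-\delta$ unknowns, then invoke the individual-element-privacy property of $\mathbf{A}_\lambda$ (Sec.~\ref{security_single}) to conclude no single coordinate is pinned down. You are in fact more careful than the paper, which literally writes ``the probability \ldots\ is $\frac{1}{\infty}$''; your concrete bound $O(1/p)$ over $\mathbb{Z}_p$ is the right way to make ``negligible'' meaningful.

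One remark: the SAP-interaction subtlety you flag as the main obstacle is \emph{not} part of this lemma. As stated, Lemma~\ref{ana:lemma1} gives the adversary only $\{\mathbf{y^{i(t)}}\}$, $\mathcal{C}$, and $\mathcal{E}_j$; the key-vector aggregate $\sum_i \mathbf{k^{i(t)}}$ enters only in Lemma~\ref{ana:lemma2}, and the SAP transcript is handled separately in the hybrid argument of the Theorem. So your hybrid-with-SAP-simulator step is unnecessary here --- the lemma is purely about the frozen-vector channel in isolation, and your linear-algebra argument already suffices. The union bound over groups is likewise not required, since the advantage is defined per group $j$.
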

\begin{remark}
    The adversary can obtain $l(\lambda-1)$ independent equations, with $l\lambda$ variables. Hence there are infinite solutions, and the probability of $M$ determining the unique $(\bm{x}^i)$ is $\frac{1}{\infty}$. 
\end{remark} 

\begin{lemma}[Privacy during the thawing phase]
    \label{ana:lemma2}
    Fix $p$, $\mathcal{U}$, $m$, $\lambda$, $\bm{A}$ and the sum of private vectors $\sum_{i\in\mathcal{U}'} \bm{x}^{i}$. For any PPT adversary $M$ who is given $\{\bm{y}^{i}\}_{i \in \mathcal{U}'}$, $\mathcal{C}$ and $\sum_{i\in \mathcal{U}'}\bm{k}^{i}$, the advantage of M to obtain any unbroken element $a_j$ is defined as:
    \begin{equation}
        Adv_{M}^{y,k}(\lambda):=Pr[SLE_{\mathcal{C}}(\bm{A}\sum_{j\in \mathcal{U}'}\bm{d}^{i}_j)\rightarrow a_j]_{j\in [1,\lceil\frac{m}{\lambda}   \rceil]}.
    \end{equation}
    There exists a negligible function $\varepsilon$ such that $Adv_{M}^{y,k}(\lambda) \le \varepsilon$.
\end{lemma}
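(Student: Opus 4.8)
The plan is to reduce Lemma~\ref{ana:lemma2} to Lemma~\ref{ana:lemma1} by showing that the extra information available to the adversary here --- namely $\sum_{i\in\mathcal{U}'}\mathbf{k^{i(t)}}$ together with the frozen vectors $\{\mathbf{y^{i(t)}}\}_{i\in\mathcal{U}'}$ --- yields only the underdetermined system $\mathbf{A}_{\lambda}\sum_{i\in\mathcal{U}'}\mathbf{d^{i(t)}_j}=\mathbf{b}_j$ \emph{for the sum}, and that recovering an individual unbroken element $a_j$ of the honest user's vector from the sum is no easier than in the single-user case. First I would fix an arbitrary honest user $i\in\mathcal{U}'$ and observe that $\sum_{i\in\mathcal{U}'}\mathbf{d^{i(t)}_j} = \mathbf{d^{i(t)}_j} + \sum_{i'\in\mathcal{U}'\setminus\{i\}}\mathbf{d^{i'(t)}_j}$; since the other honest users' group-vectors are independent of user $i$'s and unknown to $M$ (they are masked by SAP), knowledge of the sum $\mathbf{b}_j$ does not pin down $\mathbf{d^{i(t)}_j}$ --- the map from $(\mathbf{d^{i'(t)}_j})_{i'}$ to the honest $\mathbf{d^{i(t)}_j}$ given their sum is a one-to-many relation with an unbounded fiber over $\mathbb{Z}_p$ in the idealized analysis the paper uses.

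Next I would argue that even granting $M$ the worst case --- that it already knows every other participant's group-vector, so that it can compute $\mathbf{d^{i(t)}_j}$ exactly from $\mathbf{b}_j$ --- it still only learns $\mathbf{d^{i(t)}_j}$ up to the broken set $\mathcal{E}_j$ and the same rank deficiency as before. Concretely, the combined knowledge $\{\mathbf{y^{i(t)}}\}$ gives the $(\lambda-\delta-1)$ rows $\check{\mathbf{A}}^{\delta+1}_{\lambda}\mathbf{d^{i(t)}_j}$ for user $i$, and $\sum_{i\in\mathcal{U}'}\mathbf{k^{i(t)}}$ gives only the $\delta+1$ \emph{aggregated} residual rows $\bm{\alpha}^{\delta+1}_{\lambda}\sum_{i\in\mathcal{U}'}\mathbf{d^{i(t)}_j}$, not the per-user residual rows. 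So for the honest user's own group-vector the adversary has at most $\lambda-\delta-1$ independent linear constraints plus $\delta$ broken coordinates, i.e. $\lambda-1$ constraints on $\lambda$ unknowns --- exactly the situation of Lemma~\ref{ana:lemma1}, where $rank = \lambda-\delta-1 < \lambda-\delta$ forces infinitely many solutions, and the echelon-form construction of $\mathbf{A}_{\lambda}$ from Sec.~\ref{security_single} guarantees no single unbroken coordinate is determined. I would therefore conclude $Adv_{M,\delta}^{y,k}(\lambda) \le Adv_{M',\delta}^{y}(\lambda)$ for a suitable reduction adversary $M'$, and invoke Lemma~\ref{ana:lemma1} to bound the latter by a negligible $\epsilon$.

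The main obstacle I anticipate is making precise the claim that the \emph{aggregated} key vector $\sum_{i\in\mathcal{U}'}\mathbf{k^{i(t)}}$ leaks nothing about the honest user's individual residual rows $\bm{\alpha}^{\delta+1}_{\lambda}\mathbf{d^{i(t)}_j}$ --- this is exactly the security guarantee inherited from the underlying SAP, so the proof must be careful to phrase it as ``under the privacy guarantee of the composed SAP, the view of $\mathcal{C}$ is simulatable from $\sum_{i\in\mathcal{U}'}\mathbf{k^{i(t)}}$ alone,'' rather than proving it from scratch. A secondary subtlety is the presence of at least two honest users (so that the aggregate genuinely hides each summand); if only one honest user survives, the argument degenerates and one must fall back on the fact that thawing on the user side still only reveals $\mathbf{sum}$, and that SAP itself is assumed to guarantee privacy precisely when the honest set is large enough. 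I would handle both points by stating the reduction relative to the SAP's own privacy definition and the threshold assumption it already requires, so that Lemma~\ref{ana:lemma2} adds no new assumption beyond those of Lemma~\ref{ana:lemma1} and the host protocol.
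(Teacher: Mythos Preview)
Your proposal is correct and follows essentially the same approach as the paper: both argue that the aggregated key vector together with the frozen vectors lets the adversary recover only $\sum_{i\in\mathcal{U}'}\mathbf{d^{i(t)}_j}$, leaving for the honest user's individual $\mathbf{d^{i(t)}_j}$ just the underdetermined system $\check{\mathbf{A}}^{\delta+1}_{\lambda}\mathbf{d^{i(t)}_j}$ (plus $\mathcal{E}_j$), which is exactly the situation of Lemma~\ref{ana:lemma1}. Your treatment is considerably more careful than the paper's two-line proof---in particular your explicit reduction $Adv_{M,\delta}^{y,k}(\lambda)\le Adv_{M',\delta}^{y}(\lambda)$ and your discussion of the SAP-inherited hiding of individual residual rows and the honest-threshold caveat are points the paper simply elides.
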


\begin{remark}
    $\mathcal{S}$ can obtain $\sum_{j\in \mathcal{U}'}\bm{d}^{i}_j$ by \cref{eq:lineareqs}. For any individual $\bm{d}^{i}_j$ (with noise), the information known to $M$ is $\sum_{j\in \mathcal{U}'}\bm{d}^{i}_j$ and $\check{\bm{A}}\bm{d}^{i}_j$. So there are still infinite solutions, and $Adv_{M}^{y,k}(\lambda)=\frac{1}{\infty}$.
\end{remark}

Protocol security requires that adversaries can not obtain private information of any \textbf{individual} honest participant. The view of a participant consists of its internal information and received messages. Given a SAP that can maintain security in the active adversary setting, \cref{theorem} guarantees its security when integrated with PVF. \cref{theorem} demonstrates the information revealed during a real execution is \textbf{indistinguishable} from that obtained through a random simulation. The proof of \cref{theorem} is is carried out in a \textit{Random Oracle model}. In this model, we define a trapdoor function to inform $SIM$ of the \textbf{sum} of existing honest users' private information. During one execution of the protocol, $SIM$ can only access it once to obtain necessary information. Let $\mathcal{C}$ denote the set of malicious participants, which is a subset of $\mathcal{U} \cup \{\mathcal{S}\}$. The ideal function is defined as follows:
\begin{equation}
    {Ideal}_{{\{ \bm{x}^{i}\}}_{i \in \mathcal{U}\backslash \mathcal{C}}}(L) = \left\{
    \begin{aligned}
        \sum_{i \in L}\bm{x}^{i} & ,~L \subseteq (\mathcal{U}\backslash \mathcal{C})~and~\left | L \right | >\left \lceil \frac{n}{3} \right \rceil   \\
        \bot                            & ,~otherwise                                                         \\
    \end{aligned}
    \right. .
\end{equation}
\begin{theorem} [Security against malicious participants]
\label{theorem}
Let $\mathit{REAL}_{\mathcal{C}}^{\mathcal{U},\lambda}( \{\bm{x}^{i}\}_{i\in \mathcal{U}},\mathcal{U}')$ denote a random variable representing the joint view of adversaries in an actual protocol execution and $\mathit{SIM}_{\mathcal{C}}^{\mathcal{U},\lambda}( \{\bm{x}^{i}\}_{i\in \mathcal{U}},\mathcal{U}')$ denote the joint view of adversaries in a simulated protocol execution. For all $\lambda>2, \mathcal{U},\bm{x}^{i}_{i\in \mathcal{U}},\mathcal{U}'$, $\mathcal{C} \subseteq \mathcal{U} \cup \{ \mathcal{S}\}$ and SAP that can ensure privacy in the active adversary setting, there exists a PPT simulator $SIM$ such that:
\begin{equation}
    \mathit{SIM}_{\mathcal{C}}^{\mathcal{U},\lambda}( \{\bm{x}^{i}\}_{i\in \mathcal{U}},\mathcal{U}')
    \equiv \mathit{REAL}_{\mathcal{C}}^{\mathcal{U},\lambda}( \{\bm{x}^{i}\}_{i\in \mathcal{U}},\mathcal{U}'),
\end{equation}
where ``$\equiv$'' denotes the distributions are identical.
\end{theorem}

We use a standard hybrid argument to prove the theorem. 
\begin{proof}
We define a sequence of hybrid distributions ${H}_{0},{H}_{1},\ldots$ to denote a series of modifications to $REAL$, which can finally get $SIM$. We prove $SIM$ and $REAL$ are indistinguishable by proving two adjacent hybrids are indistinguishable.
\begin{enumerate}[leftmargin=*]
    \item[$H_0$] In this hybrid, $SIM$ is exactly the same as $REAL$.
    \item[$H_1$] This hybrid is distributed similarly to the previous one, except for the following modifications. $SIM$ obtains $\sum_{i \in \mathcal{U}' \backslash \mathcal{C}}\bm{x}^{i}$ by calling ${Ideal}_{{\{ \bm{x}^{i}\}}_{i \in \mathcal{U}\backslash \mathcal{C}}}( {\mathcal{U}' \backslash \mathcal{C}} )$. $SIM$ aborts if there is an illegal request. We replace the ciphertexts of $\{\bm{y}^{i}\}_{i\in\mathcal{U}}$ with the ciphertexts of uniformly random vectors $\{\bm{w}^{i}\}_{i\in\mathcal{U}}$ that satisfy $\sum_{i \in \mathcal{U}' \backslash \mathcal{C}}\bm{w}^{i} = \sum_{i \in \mathcal{U}' \backslash \mathcal{C}}\bm{y}^{i}$. $\sum_{i \in \mathcal{U}' \backslash \mathcal{C}}\bm{y}^{i}$ can be can be computed from \cref{eq:y} based on $\sum_{i \in \mathcal{U}' \backslash \mathcal{C}}\bm{x}^{i}$. The IND-CPA and IND-CTXT security of symmetric authenticated encryption guarantees the distribution of this hybrid is indistinguishable from the previous one.
    \item[$H_{2}$] This hybrid is distributed exactly as the previous one, except $SIM$ aborts if there is an invalid signature ($\sigma_{1}^{i}$, $\sigma_{2}^{i}$ or $\sigma_{3}$). The UF-CMA security of the digital signature scheme can ensure $\mathcal{C}$ cannot forge any valid signature of an honest user, so the distribution of this hybrid is indistinguishable from the previous one.
    \item[$H_{3}$] This hybrid is distributed similarly to the previous one, except for the following modifications. \textbf{Firstly}, according to the security analysis process of the integrated SAP, we replace the corresponding messages conveyed in SAP with random strings of equal length. \textbf{Secondly}, we replace the frozen vectors $\{\bm{y}^{i}\}_{i\in\mathcal{U}}$ received by $\mathcal{S}$ with uniformly random vectors $\{\bm{w}^{i}\}_{i\in\mathcal{U}}$. $\bm{x}^i$ is added with noise $(\check{\bm{A}}\bm{e})$ through \cref{eq:noise}. Therefore, the hardness of LWE decision problem in \cref{subsec:lwe} ensures that $(\check{\bm{A}}, \bm{y}^i)$ and $(\check{\bm{A}}, \bm{w}^i)$ are indistinguishable, which guarantees $\mathcal{S}$ does not obtain private information from honest users through frozen vectors. Therefore, the security of SAP and the lemmas ensure the distribution of this hybrid is indistinguishable from the previous one.
\end{enumerate}
Therefore, the distribution of $SIM$ which is the same as $H_3$ is indistinguishable from $REAL$. $SIM$ does not depend on the inputs of honest parties, and $\mathcal{C}$ can only learn about the sum of original vectors. If too many users drop out, SAP will abort and still guarantee the above conclusion. Clearly, the security of $\lambda$-SecAgg still holds in the semi-honest setting.
\end{proof}

\section{Evaluation}
\label{sec:evaluation}


\subsection{Theoretical Complexity Analysis}

\textbf{Communication. }PVF does not increase interaction-related overhead. The vectors sent from each user are $\bm{y}^{i}\in \mathbb{Z}_p^{(\lambda-1)\left \lceil \frac{m}{\lambda} \right \rceil}$ and $\bm{k}^{i}\in \mathbb{Z}_p^{\left \lceil \frac{m}{\lambda} \right \rceil}$, which contain the same number of entries as $\bm{x}^{i}_{pad}\in \mathbb{Z}_p^{m'}$. Hence, the theoretical communication complexity of SAP remains unchanged. 

\textbf{Computation. }The additional computation operations required by \sysname\ involve conducting $ \lceil \frac{m}{\lambda} \rceil $ matrix-vector multiplications by both users and $\mathcal{S}$, incurring a computation cost of $O(\lambda m)$. 

The theoretical complexity for one aggregation of users and $\mathcal{S}$, before and after \sysname\ integration, is summarized in Table~\ref{tab:theocom}. For the method to \textbf{avoid padding}, please refer to \cref{app:padding}.
\label{app:theo}
\cref{tab:theocom} indicates for some SAPs, the theoretical computation complexity increases after integrating \sysname. This is attributed to the transformation required for the entire original vector within \sysname. However, intuitively, the computation time for secure aggregation per entry (e.g., homomorphic encryption, PRG expansions, or modular exponentiations) tends to be \textbf{significantly greater compared to the computation time per entry in linear transformations}. This suggests that \sysname\ still manages to compress computation overhead, which is evident in the experiment outcomes. 

In practice, the choice of $\lambda$ mainly considers: (i) Security requirements in DVE (see \cref{subsec:lwe}). (ii) SAP. For schemes with more masking-related overhead, a larger $\lambda$ performs better. For schemes with more interaction-related overhead, a smaller $\lambda$ performs better (as in \cref{comana}). (iii) $m$. For larger $m$, masking-related overhead is greater, so a larger $\lambda$ performs better.

\begin{table*}[!t]
	\begin{center}
		\caption{Theoretical complexity of SAP without and with \sysname\ of single-round aggregation in the semi-honest setting. $\bm{O(\cdot)}(\cdot)$ in the rightmost column indicates the communication complexity pertains to users, followed by the number of interactions between users and the server. ``P. of G. M.'' stands for privacy of the global model. In the real world, $p \gg m \gg n $. $\uparrow$ indicates the integration of \sysname\ into the protocol would increase its theoretical complexity.}
		\resizebox{\linewidth}{!}{\begin{tabular}{ccccccc}
				\toprule
				\multirow{2}{*}{SAP}                  & \multicolumn{2}{c}{Each user} & \multicolumn{2}{c}{Server}                     & P. of                  & \multirow{2}{*}{Communi. (Inter.)}                                                           \\
				\cmidrule[0.5pt](lr){2-3} \cmidrule[0.5pt](lr){4-5}
				{}                                    & w/o \sysname                  & w/ \sysname                                         & w/o \sysname                 & w/ \sysname                                             & G. M.                                  \\
				\midrule[0.5pt]
				PPDL            & $\bm{O(m)}$              & $O(\lambda m)\uparrow$                         & $O(mn)$                 & $O(\frac{1}{\lambda}mn+ \lambda m)$                 & \Checkmark & $\bm{O(m)}$ \textbf{(1)} \\
				EPPFL         & $\bm{O(m)}$                 & $O(\lambda m)\uparrow$                       & $O(mn)$             & $O(\frac{1}{\lambda}mn+\lambda m)$                       & $\times$   & $O(m)$ (2)               \\
				NIVP-DS               & $\bm{O(m)}$              & $O(\lambda m)\uparrow$                         & $O(mn)$                 & $O(\frac{1}{\lambda}mn+ \lambda m)$                 & \Checkmark & $\bm{O(m)}$ \textbf{(1)} \\
				PracAgg & $O(mn+n^2)$              & $O(\frac{1}{\lambda}mn+ \lambda m+n^2)$        & $O(mn^2)$               & $ O(\frac{1}{\lambda}mn^2+\lambda m)$              & $\times$   & $O(m+n)$ (4)             \\
				PracAgg+       & $O(mlogn + log^2n)$      & $O(\frac{1}{\lambda}mlogn +\lambda m+ log^2n)$ & $O(mnlog n + nlog^2 n)$ & $O(\frac{1}{\lambda}mnlog n +\lambda m+ nlog^2 n)$ & $\times$   & $O(m+logn)$ (4)          \\
				EffiAgg     & $O(m+n^2)$               & $O(\lambda m+n^2)\uparrow$                     & $O(m\sqrt{p}+n)$        & $O(\frac{1}{\lambda}m\sqrt{p}+\lambda m+n)$        & $\times$   & $O(m+n)$ (4)             \\
				LPPFedL    & $O(m+n^2)$               & $O(\lambda m+n^2)\uparrow$                     & $\bm{O(m+n)}$           & $O(\lambda m+n)\uparrow$                           & $\times$   & $O(m+n)$ (4)             \\
				\bottomrule
			\end{tabular}}
		\label{tab:theocom}
	\end{center}
\end{table*}

\subsection{Experimental Settings}
\textbf{Baselines}. We include $7$ baselines that encompass $5$ types of \textbf{masking (encryption) schemes}:
\begin{itemize}[leftmargin=*]
    \item \textit{PPDL}~\cite{aono2017privacy}, utilizing \textit{Single-private-key HE} (\textbf{Type 1}), safeguards the global model, all while necessitating only a single interaction round between the users and $\mathcal{S}$.
    \item \textit{EPPFL}~\cite{Li2022EfficientPF}, relying on \textit{Multi-private-key HE} (\textbf{Type 2}), requires two interaction round.
    \item \textit{NIVP-DS}~\cite{xu2022non}, based on \textit{SMPC} (\textbf{Type 3}), requires two non-colluding servers, while involving only a single interaction round.
    \item \textit{PracAgg}~\cite{Bonawitz2017PracticalSA}, based on \textit{Pair-wise Masking} (\textbf{Type 4}), is widely regarded as the state of the art, involving multiple interactions rounds.
    \item \textit{PracAgg+}~\cite{Bell2020SecureSA}, a type-4 scheme, exhibits a more efficient communication structure.
    \item \textit{EffiAgg}~\cite{Liu2022EfficientDA}, based on \textit{Non-pair-wise Masking} (\textbf{Type 5}), requires server computation of discrete logarithms alongside multiple interactions between users and $\mathcal{S}$.
    \item \textit{LPPFedL}~\cite{Wei2023LightweightFL}, based on non-pair-wise masking, necessitates users to transmit multiple high-dimensional vectors along with multiple interactions.
\end{itemize}
To underscore the focal point of our work, our attention is exclusively directed towards secure-aggregation-related operations within the baselines, omitting other parts like ``weight aggregation'' in EPPFL. And masking schemes of other protocols that reduce \textbf{interaction-related} can be classified into the above five types, like Flamingo~\cite{Ma2023FlamingoMS} (type 4), LTPA~\cite{Liu2023LongTermPA} (type 4).

\textbf{Experimental settings}.
We run on a Linux workstation with 32GB of RAM and an AMD Ryzen 5 5600G. We use 1 NVIDIA GeForce RTX 3090 GPU only for model training, excluding the aggregation process. In all experimental settings, the space of the elements in the input vectors is 32-bit. 
Same with PracAgg, our main experiments are conducted under a semi-honest setting, with a specific focus on assessing the speed enhancement brought by \sysname. We disregard operations such as digital signatures and public key infrastructure in the active adversary setting, which do not impact the asymptotics of the results~\cite{Bonawitz2017PracticalSA}. 

\textbf{Evaluation Metrics.} We use $\textit{Improvement Factor}=\frac{\text{Time of (un)masking w/o PVF}}{\text{Time of (un)masking w/ PVF}}$ (speedup) to describe the efficacy of PVF. When evaluating communication overhead, we track costs for a single user, with the costs of $\mathcal{S}$ being $n$ times that of each user, a convention widely adopted in prior works~\cite{Liu2022EfficientDA}. The experimental results provided are the average outcomes from 5 repeated executions.

\textbf{Implementation Details}
We implement the baselines using Python. Specifically, we utilize \textit{AES-GCM} with 128-bit keys for the symmetric authenticated encryption, standard $(t,n)$ \textit{Shamir Secret Sharing}~\cite{shamir1979share}, \textit{AES} in counter mode for the pseudorandom generator, \textit{SHA-256} hash to implement a homomorphic pseudorandom generator for EffiAgg, \textit{Sympy} library~\cite{Meurer_SymPy_symbolic_computing_2017} to compute discrete logarithms for EffiAgg, and \textit{Paillier Encryption} with 1024-bit keys for PPDL. In \cref{extension:DVE}, experimental settings of the image classification task is the same as \cref{fig:end2end}, and for the movie recommendation task, the dataset is split using a \textit{Leave-One-Out}~\cite{he2017neural} approach for training and testing, where users with fewer than 10 records are excluded, using \textit{Hit Ratio} (HR) as metrics to assess the performance of the recommendations, with higher values indicating superior effectiveness. 

In \cref{fig:end2end}, we use three widely discussed SAPs: PracAgg+, PracAgg, and PPDL. NIVP-DS falls behind due to the requirement of two non-colluding servers, while LPPFedL demands increased communication overhead.
Remarkably, PPDL, previously considered impractical due to its use of HE, has its computational overhead significantly mitigated by integrating \sysname. And its advantages of operating with a single server, single-interaction communication, and preserving the global model make it more appealing.

\subsection{Performance Comparison}
\label{comana}

\textbf{Effectiveness of compressing costs. }
Results in \cref{tab:compari} indicate that integrating \sysname\ can yield a speedup ranging from $\bm{70\times}$ to $\bm{99.5\times}$ for the majority of baselines when $\lambda=100$. 
The inability to achieve a $\lambda \times$ speedup is attributed to interaction-related overhead within SAP such as key agreements and secret sharing. 
Overall, the gain from \sysname\ is the weakest for LPPFedL. This is because LPPFedL increases user communication overhead to achieve highly lightweight masking and unmasking. Consequently, interaction-related overhead constitutes a more significant portion of its computation time. As the original vector length increases from $100k$ to $500k$, the proportion of time spent on masking or encryption rises, so \sysname\ exhibits a more noticeable acceleration effect for these SAPs in general.
For PPDL and EPPFL (HE-based SAPs), only $\frac{1}{\lambda}$ of the original vector is transformed into \textbf{ciphertext}. Therefore, \sysname\ brings them communication improvements of about $32.3\times$ and $14.6\times$, respectively.

\textbf{End-to-end comparison. }
In Figure~\ref{fig:end2end}, the model employed is LeNet~\cite{lecun1998gradient}, consisting of $61,706$ parameters. The dataset is MNIST~\cite{deng2012mnist} with each user possessing local data for only two labels. Across different aggregation methods, the model eventually achieves commendable training outcomes, and the integration of \sysname\ showcases a significant acceleration effect, greatly reducing the training speed gap between using SAPs and plain aggregation.
\begin{figure}[!h]
    \centering
    \includegraphics[scale=0.6]{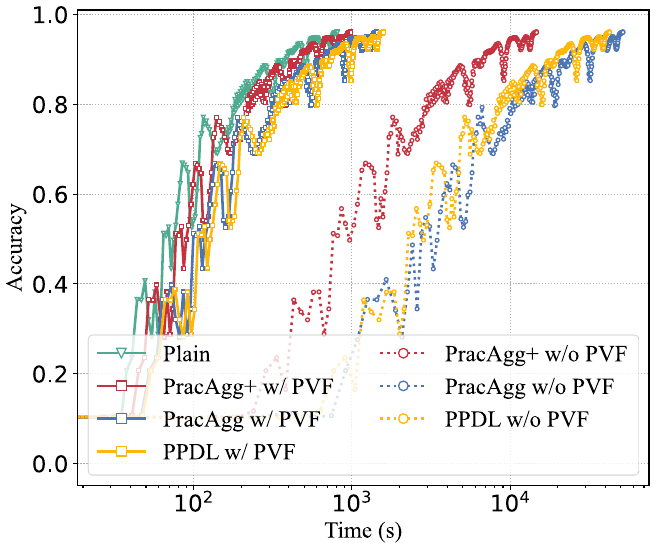}
    \caption{Comparison of various aggregation methods, with $n=100,\eta=5\%,\lambda=100$.}
    \label{fig:end2end}
\end{figure}
In Figure~\ref{fig:compression}, consistent with FetchSGD~\cite{rothchild2020fetchsgd}, we use the ResNet9 (with 6.5M parameters), CIFAR10 dataset and PracAgg, with the optimal setting for FetchSGD and $k=50,000$ for both FetchSGD and Top-k\cite{lu2023top}. It shows our method takes the \textbf{least} time and achieves the \textbf{best} accuracy because both FetchSGD and Top-k require more overhead for compression and PVF can ensure the intact aggregation of all entries in the original vector.

\begin{table*}[t!]
\caption{Comparison of secure aggregation protocols computation time (in milliseconds) and communication cost (in KB) before and after integrating \sysname\ for a single round. The number of users $n$ is set to 100, with $\lambda$ set as 100. \textbf{w/} denotes the corresponding protocol integrated with \sysname.}
\centering
\resizebox{\textwidth}{!}{\begin{tabular}{ccccccccccc}
\toprule
\multicolumn{1}{c|}{Vector Length $(m)$}& \multicolumn{5}{c|} {100k} & \multicolumn{5}{c} {500k}\\
\cmidrule[0.5pt]{1-11}
\multicolumn{1}{c|}{Operation}&Each user & Server& Server & \multicolumn{1}{c|}{Server} &Comm. Cost& \multicolumn{1}{|c}{Each user}& Server& Server& \multicolumn{1}{c|}{Server}&Comm. Cost\\
\multicolumn{1}{c|}{Dropout rate $(\eta)$} & 0\%& 0\% & 10\% & \multicolumn{1}{c|}{30\%} & 10\%& \multicolumn{1}{|c} {0\% } & 0\% & 10\%& \multicolumn{1}{c|}{30\%}& 10\%\\
\midrule[0.5pt]
\multicolumn{1}{c|}{PPDL} & 139010 & 29759 & 27418& \multicolumn{1}{c|}{21789}&\multicolumn{1}{c|}{27761}&708579 & 149293& 136155& \multicolumn{1}{c|}{107998} &138679\\
\multicolumn{1}{c|}{w/ \sysname}& 1401 & 304 & 283& \multicolumn{1}{c|}{225}&\multicolumn{1}{c|}{859}& 7128 & 1571& 1446& \multicolumn{1}{c|}{1133}&4291\\
\multicolumn{1}{c|}{Improvement Factor}& $\color{deepgreen}\bm{99.2\times}$ & $\color{deepgreen}\bm{97.9\times}$ & $\color{deepgreen}\bm{96.8\times}$ & \multicolumn{1}{c|}{$\color{deepgreen}\bm{96.8\times}$} &\multicolumn{1}{c|}{$\color{deepgreen}\bm{32.3\times}$}&$\color{deepgreen}\bm{99.4\times}$&$\color{deepgreen}\bm{95.0\times}$&$\color{deepgreen}\bm{94.1\times}$ &\multicolumn{1}{c|}{$\color{deepgreen}\bm{95.3\times}$} &$\color{deepgreen}\bm{32.3\times}$ \\
\midrule[0.5pt]
\multicolumn{1}{c|}{EPPFL}& 3464 & 755163& 754261 & \multicolumn{1}{c|}{749279} &\multicolumn{1}{c|}{9962}& 17582& 3763769 & 3732358 & \multicolumn{1}{c|}{3745663} &49798\\
\multicolumn{1}{c|}{w/ \sysname}& 44 & 7686& 7642 & \multicolumn{1}{c|}{7711} &\multicolumn{1}{c|}{681}& 219& 38833 & 38924 & \multicolumn{1}{c|}{39612}&3400 \\
\multicolumn{1}{c|}{Improvement Factor}&$\color{deepgreen}\bm{78.7\times}$ &$\color{deepgreen}\bm{98.2\times}$&$\color{deepgreen}\bm{98.7\times}$&\multicolumn{1}{c|}{$\color{deepgreen}\bm{97.2\times}$}&\multicolumn{1}{c|}{$\color{deepgreen}\bm{14.6\times}$}&$\color{deepgreen}\bm{80.3\times}$&$\color{deepgreen}\bm{97.0\times}$&$\color{deepgreen}\bm{95.9\times}$&\multicolumn{1}{c|}{$\color{deepgreen}\bm{94.6\times}$}&$\color{deepgreen}\bm{14.6\times}$\\
\midrule[0.5pt]
\multicolumn{1}{c|}{NIVP-DS}& 15 & 386 & 372& \multicolumn{1}{c|}{331}&\multicolumn{1}{c|}{586}& 72 & 1902& 1834& \multicolumn{1}{c|}{1789}&2932\\
\multicolumn{1}{c|}{w/ \sysname}& 13 & 10& 9& \multicolumn{1}{c|}{12} &\multicolumn{1}{c|}{586}& 63 & 49& 44& \multicolumn{1}{c|}{39}&2931\\
\multicolumn{1}{c|}{Improvement Factor}&$\color{deepgreen}\bm{1.2\times}$&$\color{deepgreen}\bm{38.6\times}$&$\color{deepgreen}\bm{41.3\times}$ &\multicolumn{1}{c|}{$\color{deepgreen}\bm{27.6\times}$}&\multicolumn{1}{c|}{$\setminus$}&$\color{deepgreen}\bm{1.1\times}$&$\color{deepgreen}\bm{38.9\times}$&$\color{deepgreen}\bm{41.7\times}$&\multicolumn{1}{c|}{$\color{deepgreen}\bm{45.9\times}$}&$\setminus$\\
\midrule[0.5pt]
\multicolumn{1}{c|}{PracAgg} & 13702& 14249 & 139936 & \multicolumn{1}{c|}{307031}&\multicolumn{1}{c|}{785} & 73335& 71607 & 697950& \multicolumn{1}{c|}{1515347} &3910\\
\multicolumn{1}{c|}{w/ \sysname}& 177& 187 & 1472 & \multicolumn{1}{c|}{3207} &\multicolumn{1}{c|}{785}& 779& 794 & 7063& \multicolumn{1}{c|}{15686} &3910\\
\multicolumn{1}{c|}{Improvement Factor}&$\color{deepgreen}\bm{77.4\times}$&$\color{deepgreen}\bm{76.2\times}$ &$\color{deepgreen}\bm{95.1\times}$&\multicolumn{1}{c|}{$\color{deepgreen}\bm{95.7\times}$}&\multicolumn{1}{c|}{$\setminus$}&$\color{deepgreen}\bm{94.1\times}$&$\color{deepgreen}\bm{90.2\times}$&$\color{deepgreen}\bm{98.8\times}$&\multicolumn{1}{c|}{$\color{deepgreen}\bm{96.6\times}$}&$\setminus$\\
\midrule[0.5pt]
\multicolumn{1}{c|}{PracAgg+} & 2773 & 13973 & 39735& \multicolumn{1}{c|}{69623}& \multicolumn{1}{c|}{782}&14487& 70359 & 197540& \multicolumn{1}{c|}{347055}&3908\\
\multicolumn{1}{c|}{w/ \sysname}& 38 & 159 & 410& \multicolumn{1}{c|}{724}&\multicolumn{1}{c|}{782}& 179& 787 & 2026& \multicolumn{1}{c|}{4319}&3908\\
\multicolumn{1}{c|}{Improvement Factor}&$\color{deepgreen}\bm{72.9\times}$&$\color{deepgreen}\bm{87.9\times}$&$\color{deepgreen}\bm{97.0\times}$&\multicolumn{1}{c|}{$\color{deepgreen}\bm{96.2\times}$} &\multicolumn{1}{c|}{$\setminus$}&$\color{deepgreen}\bm{80.9\times}$&$\color{deepgreen}\bm{89.4\times}$&$\color{deepgreen}\bm{97.5\times}$&\multicolumn{1}{c|}{$\color{deepgreen}\bm{80.4\times}$}&$\setminus$\\
\midrule[0.5pt]
\multicolumn{1}{c|}{EffiAgg} & 1227 & 537771& 537329 & \multicolumn{1}{c|}{539814} &\multicolumn{1}{c|}{783}& 6014 & 2662161 & 2730404 & \multicolumn{1}{c|}{2637768} &3908\\
\multicolumn{1}{c|}{w/ \sysname}& 32 & 5677& 5702 & \multicolumn{1}{c|}{6135} &\multicolumn{1}{c|}{783}& 99 & 27175 & 27440 & \multicolumn{1}{c|}{28068} &3908\\
\multicolumn{1}{c|}{Improvement Factor}&$\color{deepgreen}\bm{38.3\times}$ &$\color{deepgreen}\bm{94.7\times}$&$\color{deepgreen}\bm{94.2\times}$&\multicolumn{1}{c|}{$\color{deepgreen}\bm{88.0\times}$}&\multicolumn{1}{c|}{$\setminus$}&$\color{deepgreen}\bm{60.7\times}$&$\color{deepgreen}\bm{98.0\times}$&$\color{deepgreen}\bm{99.5\times}$&\multicolumn{1}{c|}{$\color{deepgreen}\bm{94.0\times}$}&$\setminus$\\
\midrule[0.5pt]
\multicolumn{1}{c|}{LPPFedL} & 176& 63& 59 & \multicolumn{1}{c|}{46}&\multicolumn{1}{c|}{1564} & 846& 319 & 291 & \multicolumn{1}{c|}{218} &7814\\
\multicolumn{1}{c|}{w/ \sysname}& 22 & 21& 20 & \multicolumn{1}{c|}{17} &\multicolumn{1}{c|}{1564}& 53 & 77& 70& \multicolumn{1}{c|}{61}&7814\\
\multicolumn{1}{c|}{Improvement Factor}&$\color{deepgreen}\bm{8.0\times}$ &$\color{deepgreen}\bm{3.0\times}$&$\color{deepgreen}\bm{3.0\times}$&\multicolumn{1}{c|}{$\color{deepgreen}\bm{2.7\times}$}&\multicolumn{1}{c|}{$\setminus$}&$\color{deepgreen}\bm{16.0\times}$&$\color{deepgreen}\bm{4.1\times}$&$\color{deepgreen}\bm{4.2\times}$&\multicolumn{1}{c|}{$\color{deepgreen}\bm{3.6\times}$}&$\setminus$\\
\bottomrule
\end{tabular}}
\label{tab:compari}
\end{table*}
\begin{figure}[!htbp]
    \centering
    \includegraphics[scale=0.3]{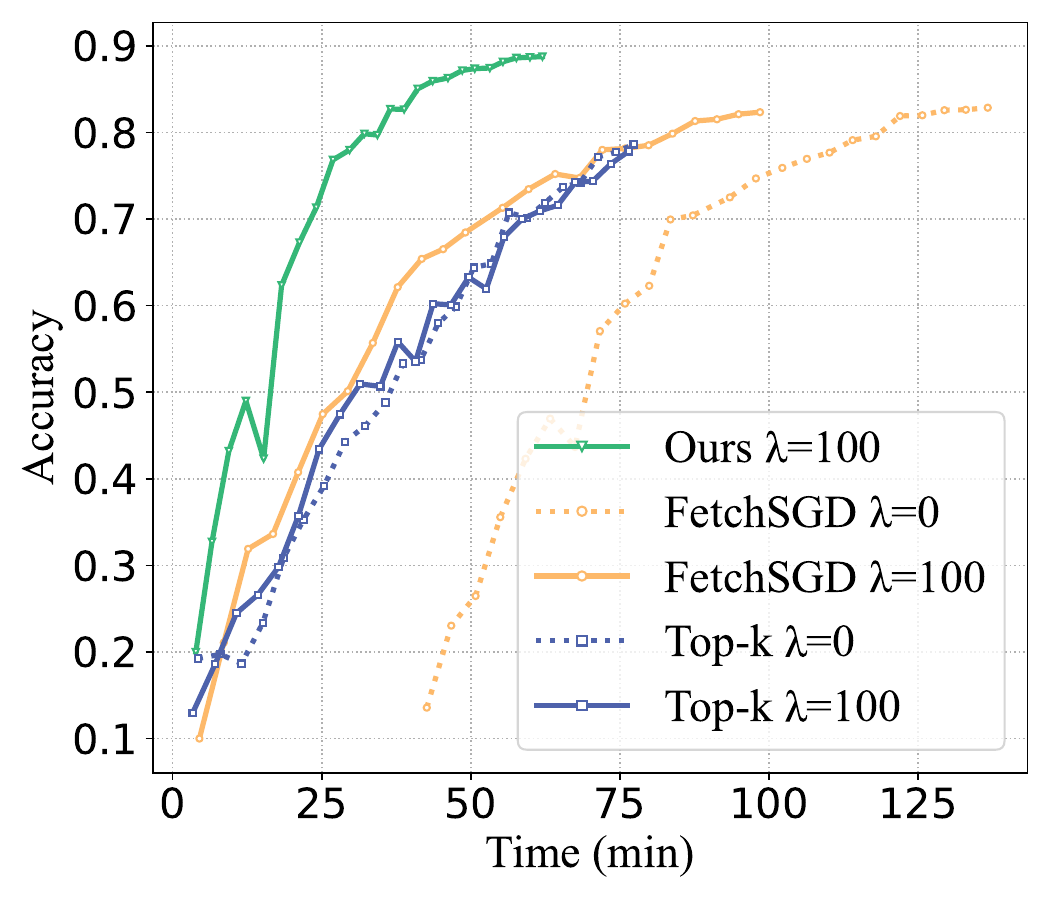}
    \caption{Comparison with compression-based techniques, with $n=100,\eta=5\%$.}
    \label{fig:compression}
\end{figure}

\begin{figure}[!h]
    \centering
    \includegraphics[scale=0.33]{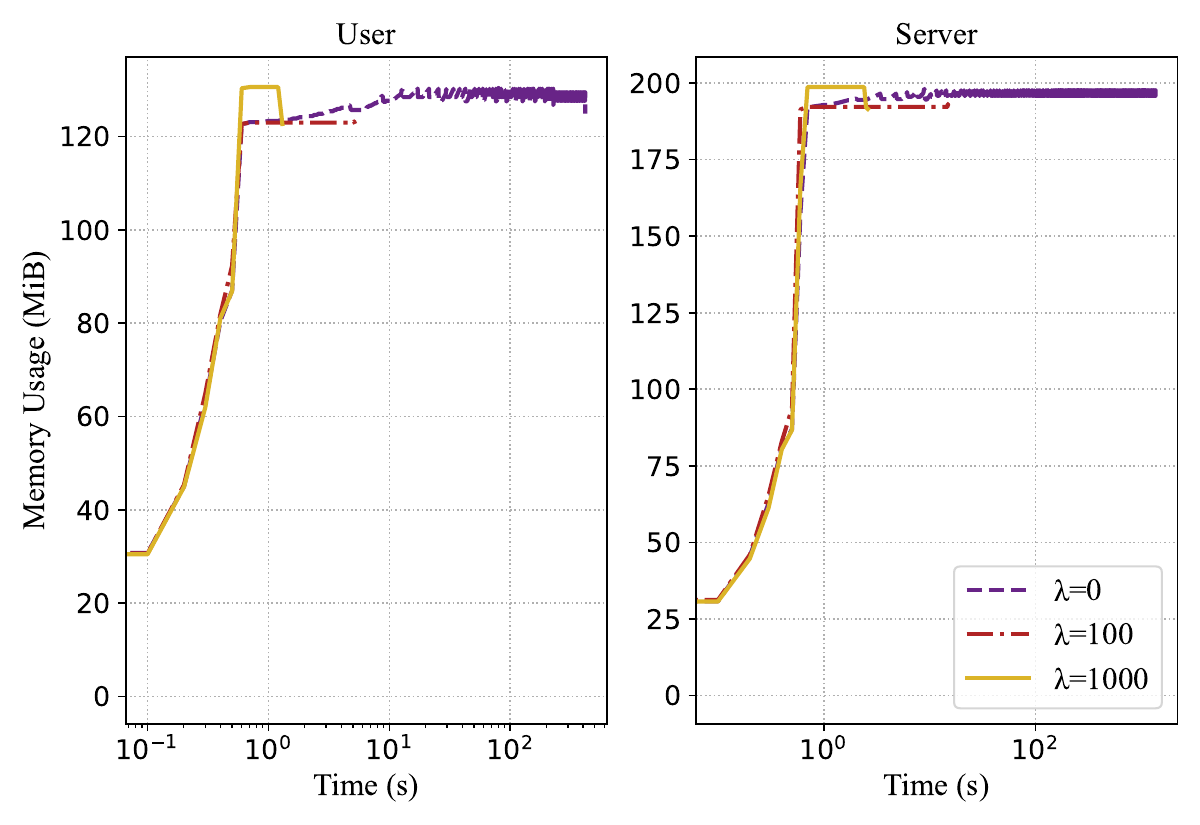}
    \caption{Memory usage with different $\lambda$. $n=100$, $m=100k$, and $\eta=10\%$.}
    \label{fig:memory}
\end{figure}

\textbf{Memory Usage}.
\label{app:memory}
In \cref{fig:memory}, we evaluate the memory usage of each user and $\mathcal{S}$ during PracAgg without PVF ($\lambda=0$) and with PVF ($\lambda=100,1000$). \textbf{For each user}, the increased memory usage for $\lambda=1000$ compared to $\lambda=100$ is due to the need to store a larger transformation matrix ($\lambda \times \lambda$) required by PVF. The increase in memory usage for $\lambda=0$ compared to $\lambda=100$ is because PVF reduces the number of random numbers generated ($nm \rightarrow \frac{nm}{\lambda}$) and decreases the scale of vector addition computations ($m \rightarrow \frac{m}{\lambda}$) during the masking process. \textbf{For $\mathcal{S}$}, all three methods require summing up the vectors uploaded by all users, with a memory usage of approximately $O(mn)$, so the additional $O(\lambda^2)$ overhead introduced by PVF is negligible. The increased memory usage for $\lambda=0$ compared to $\lambda=100$ is due to PVF reducing the number of random numbers generated ($(\eta(1-\eta)n^2+(1-\eta)n)m \rightarrow \frac{(\eta(1-\eta)n^2+(1-\eta)n)m}{\lambda}$) and decreasing the scale of vector addition computations during the unmasking process.

\textbf{For FL of LLM}.
\label{app:fedllm}
LLMs have a profound impact on the entire AI research community due to their excellent contextual learning and instruction following ability~\cite{zhao2023survey}. Given privacy concerns, training (or fine-tuning) LLM in a federated setting has been explored~\cite{hilmkil2021scaling,ye2024openfedllm}. In the context of LLMs, during the aggregation process, the length of user original vectors reaches billions. Taking Llama2-7B~\cite{touvron2023llama} as an example, assuming 1\% of the parameters need to be updated during the fine-tuning, which is 700M, the computational cost of using a general secure aggregation scheme is unimaginable, making \sysname\ particularly important. As shown in Table~\ref{tab:llm}, the time required to train one round without using PVF can basically meet the requirement of training 100 rounds with PVF. At present, FL for LLM among a multitude of lightweight clients is unrealistic. The SOTA scheme OpenFedLLM~\cite{ye2024openfedllm} involves only $n\in\{2,4,5\}$ clients per round (using Llama2-7B). And PVF can contribute to future FedLLMs and involving more clients.

\begin{figure*}[!t]
    \centering
    \includegraphics[width=0.65\textwidth]{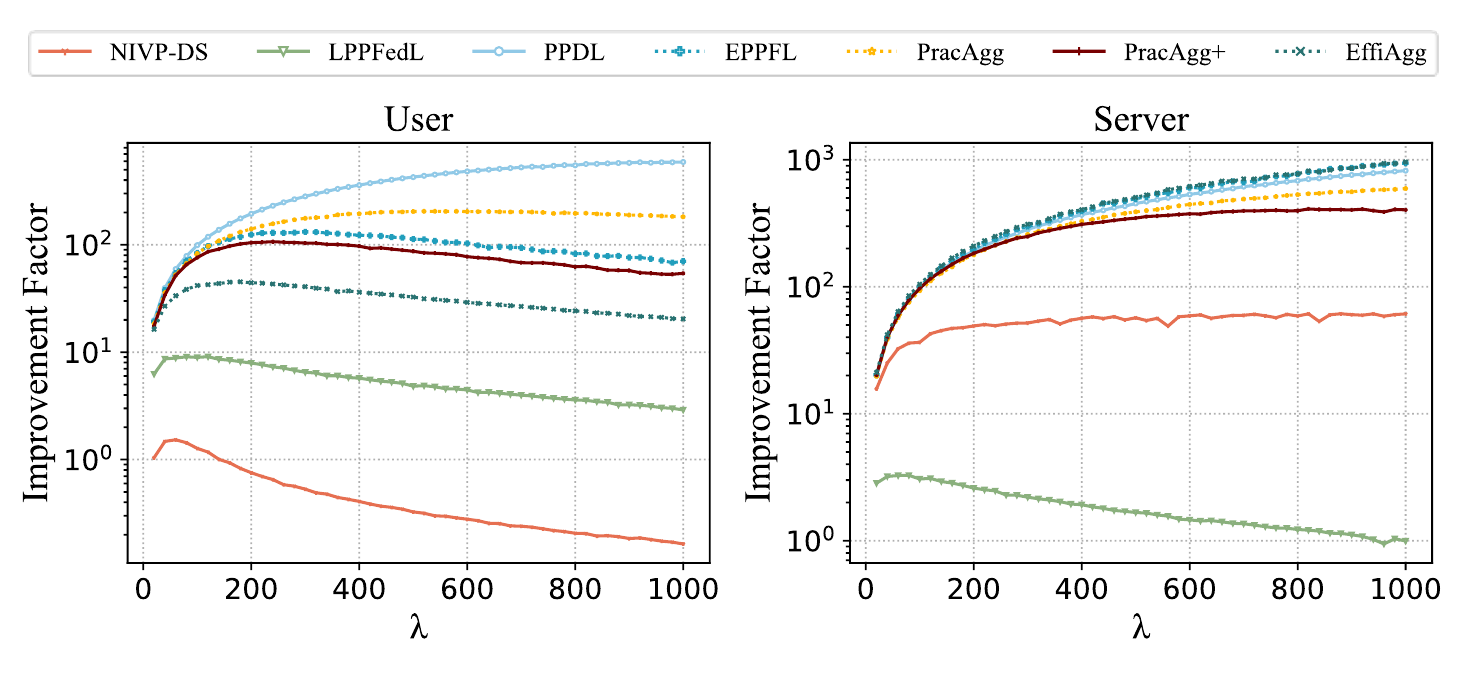}
    \caption{Computation acceleration for secure aggregation brought by different $\lambda$ in \sysname. $n=100$, $m=100k$, and $\eta=10\%$.}
    \label{fig:lambda}
\end{figure*}

\begin{table}[h!]
\caption{
Estimated computational overhead per round with and without PVF for fine-tuning Llama2-7B. $n=100$, $m=700M$, $\lambda=100$, and $\eta=0\%$.}
\centering
\resizebox{0.7\linewidth}{!}{
 \begin{tabular}{ccccc}
\toprule
 \multirow{2}{*}{Scheme}&\multicolumn{2}{|c|}{Each user}& \multicolumn{2}{c}{Server}\\
\cmidrule[0.5pt](lr){2-3} \cmidrule[0.5pt](lr){4-5} 
{}&\multicolumn{1}{|c}{w/o}& \multicolumn{1}{c|}{w/} & w/o & \multicolumn{1}{c}{w/}\\
\midrule[0.5pt]
\multicolumn{1}{c|}{PPDL} &$\sim$300h&\multicolumn{1}{r|}{$\sim$3h}&$\sim$60h&\multicolumn{1}{c}{$\sim$0.6h}\\
\multicolumn{1}{c|}{PracAgg} &$\sim$27h&\multicolumn{1}{r|}{$\sim$0.27h}&$\sim$27h&\multicolumn{1}{c}{$\sim$0.27h}\\
\multicolumn{1}{c|}{PracAgg+}&$\sim$5h&\multicolumn{1}{r|}{$\sim$0.05h}&$\sim$27h&\multicolumn{1}{c}{$\sim$0.27h}\\
\bottomrule
\end{tabular}}
\label{tab:llm}
\end{table}

\subsection{Ablation Study}
\label{ablation}

\textbf{$\bm{\lambda}$. }
We conduct experiments to analyze the variation in the acceleration effect of \sysname\ on all baselines under different $\lambda$ values, as depicted in Figure \ref{fig:lambda}. \sysname\ exhibits a more pronounced acceleration effect for participants with substantial original computational overhead, such as PPDL users. For certain SAPs like EffiAgg, the improvement factor of $\mathcal{S}$ can reach up to $1000\times$. As $\lambda$ increases, the improvement factor of \sysname\ gradually stabilizes and may experience a slight decline in certain SAPs. This trend emerges because when $\lambda$ becomes sufficiently large, the primary computational overhead in secure aggregation shifts from \textbf{masking-related} overhead to \textbf{interaction-related} overhead like secret sharing. It's worth noting that when $\lambda$ reaches a certain threshold, for certain entities with relatively low primary computational load, such as users in NIVP-DS and servers in LPPFedL, integrating \sysname\ may increase their computational overhead (improvement factor less than $1$). This occurs because the computation cost incurred by the linear transformation in \sysname\ becomes notable. But the gains from \sysname\ for the servers in NIVP-DS and users in LPPFedL remain enticing, rendering the cost of the \sysname\ linear transformation negligible.

\textbf{Disrupting Variables Extension. } 
We focus on the impact of DVE on the model's performance in this part. In the 32-bit input space, $\epsilon$ in \cref{eq:sigma} is set to 8783. And we explore the impact of DVE on model performance in two practical tasks: \textbf{(i) Image classification}, with LeNet model and MNIST dataset; \textbf{(ii) Movie recommendation}, using FedMF~\cite{chai2020secure} model with the item embedding size of 32, i.e. 118,592 parameters in total, and ML-1M~\cite{harper2015movielens} dataset. As shown in Figure~\ref{fig:dp}, the impact of using PracAgg integrated with DVE on model performance is negligible.
\begin{figure}[!h]
    \centering
    \includegraphics[scale=0.4]{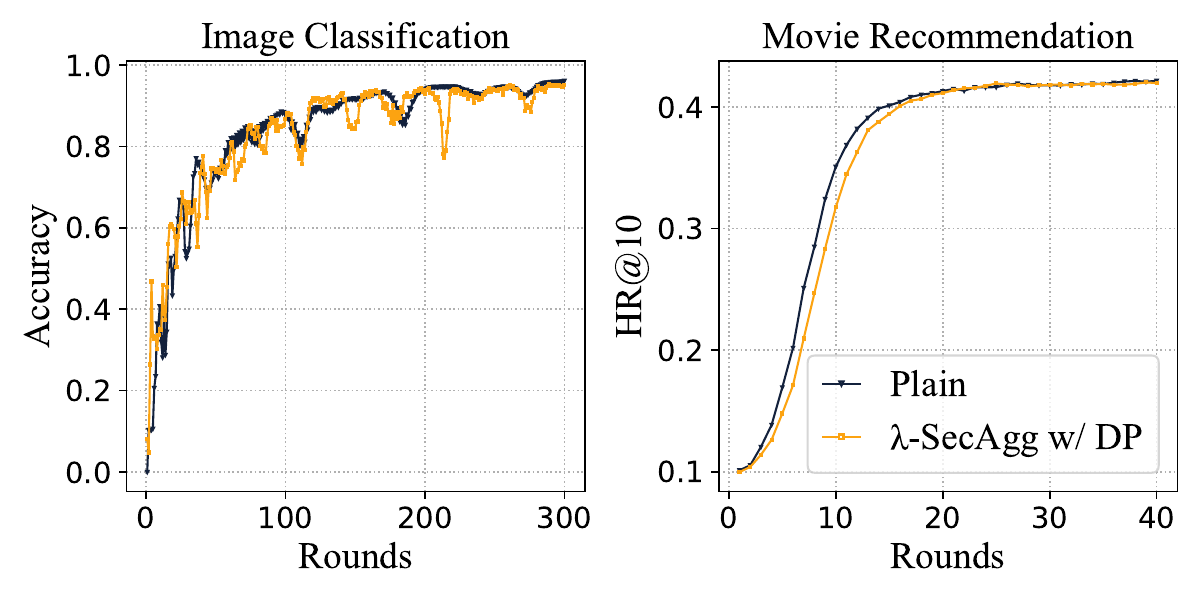}
    \caption{The impact of DVE on model performance in practical applications. In these two tasks, $n=100$, $\eta=100$, and $\lambda$=100. }
    \label{fig:dp}
\end{figure}

{$\bm{n}$.} Figure \ref{fig:num} displays the speedup effect of \sysname\ across different $n$. The speedup remains consistently high for PracAgg+ and PPDL across various $n$. However, as $n$ rises, the speedup for PracAgg diminishes. This occurs because PracAgg necessitates multiple secret sharings for each user across the entire user set, along with multiple secret reconstructions by $\mathcal{S}$. These interaction-related overhead increases more significantly as $n$ increases. Nonetheless, even when $n$ reaches $1000$, the gain brought by \sysname\ for PracAgg servers remains above $\bm{85\times}$, with user gain still exceeding $\bm{55\times}$.

\begin{figure}[!h]
    \centering
    \includegraphics[scale=0.4]{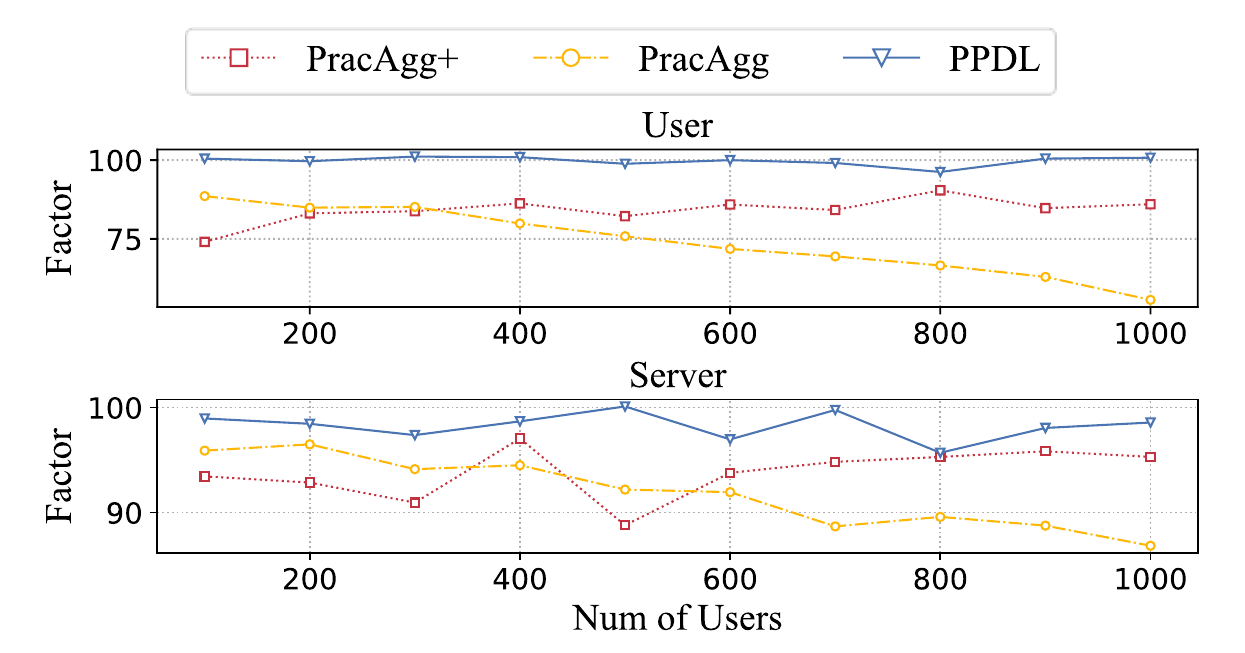}
    \caption{The gain introduced by \sysname\ with respect to the number of users, with $\eta=10\%$. $m=100k$, and $\lambda=100$.}
    \label{fig:num}
\end{figure}

\textbf{Transformation. }The additional overhead of integrating \sysname\ primarily arises from \cref{eq:y,eq:k,eq:lineareqs}. 
Figure~\ref{fig:trans} illustrates the \textbf{overall} additional computation costs of PVF concerning different $m$ and $\lambda$. It is evident that when $\lambda=100$ and the original vector length is substantial (reaching $200k$), the computation time is less than 15\,ms, nearly negligible compared to the computational time saved by \sysname. 
As $\lambda$ gradually increases while vector length remains fixed, the number of loops during freezing and thawing computations decreases in the beginning because of the reduced number of groups. When $\lambda$ exceeds 60 and continues to increase, the scale of matrix-vector multiplication gradually rises, consequently increasing the computation cost. However, this cost remains consistently within the range of tens of milliseconds, posing a negligible computation burden.
\begin{figure}[!h]
    \centering
    \includegraphics[scale=0.4]{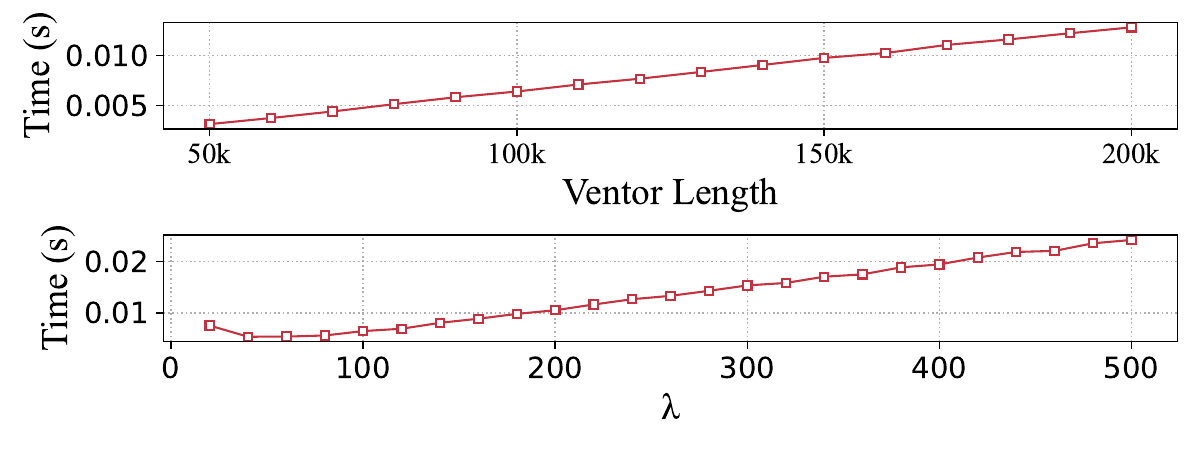}
    \caption{Additional computational overhead incurred by \sysname\ itself under different $\lambda$ and vector lengths.}
    \label{fig:trans}
\end{figure}

\textbf{The impact of padding}.
\label{app:padding}
\sysname\ necessitates padding the original vector, thereby increasing the vector size. 
In theory, the additional cost that padding introduces in computation and communication does not exceed $\frac{\lambda}{m}$ of the original, as \textbf{the maximum value of padding length} is $\lambda$. Table \ref{tab:padding} showcases the impact of no padding versus padding $\lambda$ entries on the computation and communication overhead, where $\frac{\lambda}{m}=0.001$. It's evident that the overhead induced by padding is almost negligible. 
To strictly adhere to the principle of ``\textbf{not increasing any communication overhead}'', we present a method to avoid padding. We extract the first $\left \lfloor \frac{m}{\lambda} \right \rfloor \lambda$ entries from the original vector and apply PVF to them. The remaining entries, which are fewer than $\lambda$, are appended to the key vector $\bm{k}$ for participation in the secure aggregation. This approach allows us to obtain the aggregate result of the entire original vector while eliminating the need for padding.
\begin{table}[h!]
\caption{
Comparison of overhead with and without padding. $n=100$, $m=100k$, $\lambda=100$, and $\eta=10\%$.}
\centering

 \resizebox{\linewidth}{!}{\begin{tabular}{ccccccc}
\toprule
 \multirow{2}{*}{Scheme}&\multicolumn{2}{|c|}{User comp. (ms)}& \multicolumn{2}{c|}{Server comp. (ms)}&\multicolumn{2}{c}{Comm. cost (KB)}\\
\cmidrule[0.5pt](lr){2-3} \cmidrule[0.5pt](lr){4-5} \cmidrule[0.5pt](lr){6-7}
{}&\multicolumn{1}{|c}{No pad}& \multicolumn{1}{c|}{Pad} & No pad & \multicolumn{1}{c|}{Pad}& No pad & Pad\\
\midrule[0.5pt]
\multicolumn{1}{c|}{PPDL} &1402&\multicolumn{1}{r|}{1403, $\uparrow 1$}&303&\multicolumn{1}{r|}{304, $\uparrow 1$}&859 &860, $\uparrow 1$\\
\multicolumn{1}{c|}{EPPFL}&41&\multicolumn{1}{r|}{42, $\uparrow 1$}&7337&\multicolumn{1}{r|}{7351, $\uparrow 14$}&681&681, $\uparrow 0$\\
\multicolumn{1}{c|}{NIVP-DS} &13&\multicolumn{1}{r|}{13, $\uparrow 0$}&10&\multicolumn{1}{r|}{10, $\uparrow 0$}&587&587, $\uparrow 0$\\
\multicolumn{1}{c|}{PracAgg} &162&\multicolumn{1}{r|}{164, $\uparrow 2$}&1456&\multicolumn{1}{r|}{1463, $\uparrow 7$}&785&785, $\uparrow 0$\\
\multicolumn{1}{c|}{PracAgg+}&38&\multicolumn{1}{r|}{39, $\uparrow 1$}&405&\multicolumn{1}{r|}{409, $\uparrow 4$}&783&783, $\uparrow 0$\\
\multicolumn{1}{c|}{EffiAgg}&29&\multicolumn{1}{r|}{32, $\uparrow 3$}&5404&\multicolumn{1}{r|}{5482, $\uparrow 78$}&783&783, $\uparrow 0$\\
\multicolumn{1}{c|}{LPPFedL}&19&\multicolumn{1}{r|}{19, $\uparrow 0$}&20&\multicolumn{1}{r|}{20, $\uparrow 0$}&1564&1564, $\uparrow 0$\\
\bottomrule
\end{tabular}}
\label{tab:padding}
\end{table}

\section{Conclusion}
\label{sec:conclusion}
We present a new perspective aimed at mitigating the formidable computation overhead of SAPs by reducing the number of involved entries while ensuring intact secure aggregation of original vectors. Based on this, we propose \sysname, a concrete portable solution. After integrating with \sysname, $\lambda$-SecAgg involves only $\frac{1}{\lambda}$ of original vectors. Moreover, we introduce the disrupting variables extension to improve security. Extensive experiments showcase the remarkable improvements in acceleration and communication brought by \sysname\ and its portability. Consequently, our method undoubtedly renders SAPs genuinely feasible, promising inspiration for future research endeavors.




\newpage
\clearpage
\section*{Ethics Considerations}
Our proposal has no potential negative impact on society and we believe the research is done ethically. Instead, it effectively safeguards the privacy of users while enhancing the efficiency of secure aggregation.
\newpage
\clearpage
\normalem
\bibliographystyle{plain}
\bibliography{ref}
\appendix


\section{Notations}
\label{sec:notation}
The primary notations used in this work are listed in Table~\ref{tab:notations}.
\begin{table}[htbp!]
    \caption{Notations.}
    \centering
    \label{tab:notations}
    \small
    \resizebox{\linewidth}{!}{\begin{tabular}{rl}
            \toprule
            \textbf{Symbol}                           & \textbf{Definition }                                                                \\
            \midrule

            $n$                                       & Number of users                                                                     \\
            $m$                                       & Length of original vectors                                                          \\
            $\eta$                                    & Dropout rate                                                                        \\
            $\lambda$                                 & Compression factor                                                                  \\
            $\mu$                                  & The number of broken elements within every $\lambda$ elements\\
            $\mathcal{U}$                             & The user set                                                                        \\
                        $\mathcal{U'}$                             & The surviving user set                                                                        \\
            $\mathcal{S}$                             & The server                                                                          \\
            $\bm{x}^{i(t)}$                       & Original vector of $i$ in the $t$-th round (with DP-based noise)                                    \\
            $\bm{A}$                    & A $\lambda \times \lambda$ invertible matrix                                        \\
            $\bm{A}^{-1}$               & The inverse matrix of $\bm{A}$                                        \\
            $\check{\bm{A}}$            & The matrix composed only of the first $\lambda-1$ rows of $\bm{A}$      \\
            $\bm{\alpha}$                   & The vector composed of the $\lambda$-th row of $\bm{A}$                 \\
            $\check{\bm{A}}^{\mu+1}$ & $\check{\bm{A}}$ with $\mu$-security                               \\
            $\bm{\alpha}^{\mu+1}$        & $\bm{\alpha}$ with $\mu$-security                                      \\
            $SLE_{AK}(\bm{Ax})$                   & Function for computing $\bm{x}$ of $\bm{Ax}$ with the knowledge $AK$ \\
            $\bm{x}^{i(t)}_{pad}$                 & $\bm{x}^{i(t)}$ after padding                                                   \\
            $m'$                                      & Length of $\bm{x}^{i(t)}_{pad}$                                                 \\
            $\bm{d}^{i(t)}_j$                     & The $j$-th group of $\bm{x}^{i(t)}_{pad}$                                    \\
            $\bm{y}^{i(t)}$                       & Frozen vector of $\bm{x}^{i(t)}_{pad}$                                          \\
            $\bm{GN}$ & DP-based noise generated by the Gaussian mechanism \\
            $\bm{k}^{i(t)}$                       & Key vector of $\bm{x}^{i(t)}_{pad}$                                             \\
            $\bm{c}^{i(t)}$                       & Commitment vector of $\bm{x}^{i(t)}$                                            \\
            $\bm{\zeta}^{i(t)}$                       & Corresponding random vector of $\bm{c}^{i(t)}$                                  \\
            \bottomrule
        \end{tabular}}
\end{table}

\section{Optional Extensions}
\label{app:opt-extensions}
We introduce 3 optional extensions augmenting functionality and security for better portability: (i) $\mu$\textit{-security Extension}, which resists adversaries that possess prior knowledge of partial private vectors; (ii) \textit{Result Verification Extension} (RVE), which ensures the correctness of performing transformations to prevent malicious server-side computations, and (iii) \textit{User Commitment Extension} (UCE), which ensures the uniqueness of users' original vectors to prevent malicious attempts at incorrect computations. The detailed pipeline for integrating \sysname\ with all 4 extensions into SAP is depicted in Figure \ref{fig:protocol_fin}.
\begin{figure*}[t!]
    \begin{center}
        \resizebox{0.9\textwidth}{!}{\begin{tabular}{|l|}
                \hline

            \\
                \textbf{Participants}: $\mathcal{S}$ and User set $\mathcal{U}=\{u_1,u_2,\ldots,u_n\}$.                                                                                                                                                                      \\
                \textbf{Public Inputs}: $\bm{A}$, $\mu$, $\check{\bm{A}}^{\mu +1}$, $\bm{\alpha}^{\mu+1}$, $\lambda$, $\mathbb{Z}_p$, $g$ and $h$. Users' public keys for signatures $\{ sig_i^{pk}\}_{i \in \mathcal{U}}$ and the server's public key for signatures $sig_S^{pk}$.\\
                \textbf{Private Inputs}: Original vectors $\{ \bm{x}^{i(t)}\}_{i \in \mathcal{U}}$ of $t$-th iteration. Users' secret keys for signatures $\{ sig_i^{sk}\}_{i \in \mathcal{U}}$ and the server's secret key for signatures $sig_S^{sk}$.                                                                                                                                                 \\
                \textbf{Outputs}: Surviving user set $\mathcal{U}'$, $\sum_{i\in \mathcal{U}'} \bm{x}^{i(t)}$.                                                                                                                                                           \\
                $~\bullet~$\textbf{Phase 1  Freezing}                                                                                                                                                                                                                        \\
                \quad User $i \in \mathcal{U}$:                                                                                                                                                                                                                              \\
                \quad\quad - pad $\bm{x}^{i(t)}$ randomly and group the entries.                                                                                                                                           \\
                \quad\quad - add noise to $\bm{x}^{i(t)}$ via \cref{eq:noise}.                                                                                                                                           \\
                                \quad\quad - calculate key vector $\bm{k}^{i(t)}$ via \cref{eq:k_sec}.                                                                                                                                                                                     \\
                \quad\quad - calculate frozen vector $\bm{y}^{i(t)}$ via \cref{eq:y} and \cref{eq:y_sec}.                                            \\
                \quad\quad - \textcolor{red}{\underline{generate random vector $\bm{\zeta}^{i(t)}=({\zeta}^{i(t)}_1,{\zeta}^{i(t)}_2,\ldots,{\zeta}^{i(t)}_{l\lambda})$ and calculate commitment vector $\bm{c}^{i(t)}$ via \cref{eq:commitment}.}}                        \\
                \quad\quad - obtain $m_1^i=$ $\bm{y}^{i(t)}$\textcolor{blue}{\underline{(Do not send $\bm{y}^{i(t)}$ when there is RVE)}}\textcolor{red}{\underline{$||\bm{\zeta}^{i(t)}||\bm{c}^{i(t)}$}}, send $\sigma_{1}^i \rightarrow DS.sign( {{sig}_{i}^{sk},m_1^i} )$ to $\mathcal{S}$.                                                                                                                                      \\
                $~\bullet~$\textbf{Phase 2  SecAgg}                                                                                                                                                                                                                          \\
                \quad $\mathcal{S}$ and Users:                                                                                                                                                                                                                               \\
                \quad\quad - execute \textbf{SAP} for $\{\bm{k}^{i(t)}\}_{i\in \mathcal{U}}$. \\
                \quad \quad \quad \quad * $\cdots$                                                                                                                                                                                                                           \\
                \quad \quad \quad \quad * \textcolor{blue}{\underline{users get $\left(\bm{\kappa_1,\kappa_2}\right)$ and obtain $m_2^i=$ $\bm{\grave{y}}^{i(t)}( \bm{\kappa_1}\bm{y}^{i(t)}) || \bm{\acute{y}}^{i(t)}(\bm{y}^{i(t)}+\bm{\kappa_2})$, send $\sigma_{2}^i \rightarrow DS.sign( {{sig}_{i}^{sk},m_2^i} )$ to $\mathcal{S}$}}.                                                                \\
                \quad \quad \quad \quad * $\cdots$                                                                                                                                                                                                                           \\
                \quad\quad - all participants receive $\mathcal{U}'$ and $\sum_{i\in\mathcal{U}'}\bm{k}^{i(t)}$ (or $Enc(\sum_{i\in\mathcal{U}'}\bm{k}^{i(t)})$).                               \\

                \quad $\mathcal{S}$:                                                                                                                                                                                                                                         \\
                \quad \quad - if $DS.verify( {\sigma_{1}^i, {sig}_{i}^{pk},m_1^i} )\rightarrow False$, abort. Otherwise, calculate $\sum_{i\in \mathcal{U}'}\bm{y}^{i(t)}$\textcolor{blue}{\underline{(can not get $\sum_{i\in \mathcal{U}'}\bm{y}^{i(t)}$) calculate $\sum_{i\in \mathcal{U}'}\bm{\grave{y}}^{i(t)}$ and $\sum_{i\in \mathcal{U}'}\bm{\acute{y}}^{i(t)}$}}.                                                                                                         \\
                \quad \quad - \textcolor{red}{\underline{for $j\in [1,l]$, $r\in [1,\lambda]$, reveal the commitments via \cref{eq:opencommit}. If verification fails, abort.}}                                                                             \\
                $~\bullet~$\textbf{Phase 3  Thawing}                                                                                                                                                                                                                         \\
                \quad \textit{Thawing on the server side}                                                                                                                                                                                                                    \\
                \quad $\mathcal{S}$:                                                                                                                                                                                                                                         \\
                                \quad\quad - calculate $\bm{sum}=\sum_{i\in \mathcal{U}'}\bm{x}^{i(t)}_{pad}$ via \cref{eq:lineareqs} and send $\bm{sum}$ and $\sigma_{3} \rightarrow DS.sign( {{sig}_{S}^{sk},\bm{sum}} )$ to $i\in \mathcal{U}'$.                                                                                      \\
                \quad User $i \in \mathcal{U}'$:                                                                                                                                                                                                                             \\
                \quad\quad - receive $\sum_{i\in \mathcal{U}'}\bm{x}^{i(t)}$, if $DS.verify( {\sigma_{3}, {sig}_{S}^{pk},\bm{sum}} )\rightarrow False$, abort. Otherwise, unpad and output.                                                                                                                                                                                  \\
                \quad \textit{Thawing on the user side}                                                                                                                                                                                                                      \\
                \quad $\mathcal{S}$:                                                                                                                                                                                                                                         \\
                \quad\quad - send $\sum_{i\in \mathcal{U}'}\bm{y}^{i(t)}$, $Enc(\sum_{i\in\mathcal{U}'}\bm{k}^{i(t)})$ and $\sigma_{3} \rightarrow DS.sign( {{sig}_{S}^{sk},\sum_{i\in \mathcal{U}'}\bm{y}^{i(t)}}|| Enc(\sum_{i\in\mathcal{U}'}\bm{k}^{i(t)}))$ to $i \in \mathcal{U}'$.                                                                                                                   \\
                \quad User $i \in \mathcal{U}'$:                                                                                                                                                                                                                             \\
                \quad\quad - if $DS.verify( {\sigma_{3}, {sig}_{S}^{pk},\sum_{i\in \mathcal{U}'}\bm{y}^{i(t)}}|| Enc(\sum_{i\in\mathcal{U}'}\bm{k}^{i(t)} )\rightarrow False$, abort. Otherwise, \textcolor{blue}{\underline{verify $\sum_{i\in \mathcal{U}'}\bm{y}^{i(t)}$ via \cref{eq:res_verification}. If verification fails, abort.}}                                                                                                    \\
                \quad\quad - decrypt $Enc(\sum_{i\in\mathcal{U}'}\bm{k}^{i(t)})$.                                                                                                                                                                                        \\
                \quad\quad - calculate $\bm{sum}=\sum_{i\in \mathcal{U}'}\bm{x}^{i(t)}_{pad}$ via \cref{eq:lineareqs}, unpad and output.                                                                                                              \\
                {}                                                                                                                                                                                                                                                           \\
                \hline
            \end{tabular}}
    \end{center}
    \caption{The pipline of $\lambda$-SecAgg with all 4 extensions for one aggregation. \textcolor{red}{\uline{The red and underlined parts are required in the user commitment extension}}. \textcolor{blue}{\uline{The blue and underlined parts are required in the result verification extension.}}}
    \label{fig:protocol_fin}
\end{figure*}

\subsection{$\mu$-security Extension}
\label{securityex}
\textbf{Threat}. Consider an extreme scenario where an extremely powerful malicious server $\mathcal{S}$ somehow obtains an element $x^{i}_e$ of $\bm{x}^{i}$, although this is generally unrealistic in most cases. In this case, $\mathcal{S}$, during the \sysname\ (without DVE) execution, would acquire $\lambda-1$ equations of the $j$-th group where $x^{i}_e$ lies, namely $\bm{y}^{i}_j=\bm{\check{A}}\bm{d}^{i}_j$ in  \cref{eq:y}, along with $x^{i}_e$. This allows $\mathcal{S}$ to solve for all elements in the $j$-th group. Even though only a very small fraction of original vectors, $\frac{\lambda}{m}$, is leaked, we still aim to fortify against this. Assuming the adversary's capability is to steal $\mu~(< \lambda - 1)$ elements from each group of $\bm{x}^{i}$. 

We expand the definitions of incomplete matrix $\bm{\check{A}}$, and residual vector $\bm{\alpha}$ of \sysname. We define \textit{Incomplete Matrix with $\mu$-security} $\bm{\check{A}}^{\mu+1}=\bm{A}_{:\lambda-\mu-1, :}$ and \textit{Residual Matrix with $\mu$-security} $\bm{\alpha}^{\mu+1}=\bm{A}_{\lambda-\mu:, :}$. Then \textit{Frozen Vector with $\mu$-security} is:
\begin{equation}
    \label{eq:y_sec}
\bm{y}^{i}  =\left(\bm{\check{A}}^{\mu+1}\bm{d}^{i}_1,\bm{\check{A}}^{\mu+1}\bm{d}^{i}_2,\ldots,\bm{\check{A}}^{\mu+1}\bm{d}^{i}_l\right),
\end{equation}
and \textit{Key Vector with $\mu$-security} is:
\begin{equation}
    \label{eq:k_sec}
    \bm{k}^{i} =\left(\bm{\alpha}^{\mu+1}\bm{d}^{i}_1,\bm{\alpha}^{\mu+1}\bm{d}^{i}_2,\ldots,\bm{\alpha}^{\mu+1}\bm{d}^{i}_l\right).
\end{equation}
Clearly, the aforementioned changes won't affect the execution process and security of \sysname. However, the compression factor of \sysname\ will decrease from ${\lambda}$ to $\frac{\lambda}{\mu+1}$ ($0\le \mu<\lambda-1$). With these modifications, the adversary, apart from the known private elements, cannot obtain additional elements from the execution of \sysname. Particularly, in the methodology outlined in Section~\ref{mainmethod}, we consider $\mu=0$. 

\subsection{Result Verification Extension}
\label{ex:rve}
\textbf{Threat}. The malicious server might intentionally provide incorrect aggregated results to disrupt training, even if it \textbf{doesn't compromise user privacy}. When integrating \sysname\ with SAPs featuring result verification capabilities, we contemplate incorporating RVE.

Most recent SAPs that can verify aggregated results typically necessitate honest and non-dropping participants, such as \textit{Collectors}~\cite{Wang2023VOSAVA} and \textit{Auxiliary Nodes}~\cite{Eltaras2023EfficientVP}, introducing additional assumptions. Consequently, by utilizing this extension, we unavoidably add security assumptions. Without loss of generality, we adopt the approach and security assumptions from VerSA~\cite{Hahn2023VerSAVS} for \sysname, which requires the server and users not to collude but does not need other trusted third parties.
Most SAPs~\cite{aono2017privacy,Bonawitz2017PracticalSA,Bell2020SecureSA,Li2022EfficientPF,xu2022non,Liu2022EfficientDA,stevens2022efficient,Wei2023LightweightFL} lack the functionality to verify aggregated results. This extension merely constrains the server within the \sysname\ module to obtain the correct $\sum_{i\in\mathcal{U}'}\bm{y}^{i}$, and it allows secure integration of PVF and SAPs in the face of aggregation falsification attacks.

\textbf{Phase 0: MainRV.Setup($\cdot$)}. Execute Main.Setup($\cdot$).

\textbf{Phase 1: MainRV.Freeze($\cdot$)}. Execute Main.Freeze($\cdot$).

\textbf{Phase 2: MainRV.SecAgg($\cdot$)}. During the execution of VerSA, all users obtain a set of random vectors $\left(\bm{\kappa}_1,\bm{\kappa}_2\right)$ derived from shared keys. In the verification step, users send $\bm{\grave{y}}^{i}= \bm{\kappa}_1\bm{y}^{i}$ and $\bm{\acute{y}}^{i}= \bm{y}^{i}+\bm{\kappa}_2$ to the server, which simultaneously aggregates $\bm{\grave{y}}^{i}$ and $\bm{\acute{y}}^{i}$. In this process, the verification of the sum of frozen vectors and the sum of key vectors is combined, avoiding additional interaction. VerSA ensures users obtain a consistent $\mathcal{U}'$ and the correct $\sum_{i\in\mathcal{U}'} \bm{k}^{i}$. Users also receive $\sum_{i\in\mathcal{U}'} \bm{\grave{y}}^{i}$ and $\sum_{i\in\mathcal{U}'} \bm{\acute{y}}^{i}$. 

\textbf{Phase 3: MainRV.Thaw($\cdot$)}. After receiving the results, users proceed with verification:
\begin{equation}
    \label{eq:res_verification}
    \sum_{i\in\mathcal{U}'} \bm{\grave{y}}^{i} \setminus \bm{\kappa}_1 \overset{\text{?}}{=} \sum_{i\in\mathcal{U}'} \bm{\acute{y}}^{i} - |\mathcal{U}'|\bm{\kappa}_2.
\end{equation}
$\bm{y}^{i}$ is unknown to $\mathcal{S}$ throughout the process. If the verification passes, execute Main.Thaw($\cdot$) for thawing on the user side. Otherwise, conclude that the server deviates from the protocol and terminate the execution.

\subsection{User Commitment Extension}
\label{ex:uce}
\textbf{Threat}. In \sysname, if a malicious user $i$ uses inconsistent $\bm{d}^{i}_j$ and $\bm{{d}}'^{i}_j$ for freezing, resulting in obtaining $\check{\bm{A}}\bm{d}^{i}_j$ and $\bm{a}\bm{{d}}'^{i}_j$, or applies mismatched $\check{\bm{A}}$ and $\bm{\alpha}'$ to $\bm{d}^{i}_j$ for linear transformations, the frozen vector and key vector no longer correspond. Although this remains \textbf{harmless to the privacy} of honest users, it will lead to inaccuracies in the final aggregated result.

This malicious tampering is unrelated to SAPs. In the freezing phase of \sysname, we use \textit{Pedersen Commitment}~\cite{pedersen1991non} to ensure users' vector consistency. \textbf{Note} that, to enable verification, it's crucial to ensure that during the validation phase, the server is semi-honest and cannot collude with users. Otherwise, a malicious server could illegitimately approve malicious user actions, making user verification invalid, which is widely adopted in prior works~\cite{rathee2023elsa}. Below, we outline UCE to modify the Main method to achieve the aforementioned functionality. UCE supports consistency verification of the user inputs in scenarios with advanced needs. Note that UCE is only applicable to SAPs where the server has access to the aggregated results.

\textbf{Phase 0: MainUC.Setup($\cdot$)}. Execute Main.Setup($\cdot$). Given the security parameter $\rho$, it generates the group $(\mathbb{G}_p,p,g)$, where $p$ is the order of $\mathbb{G}_p$ and $g$ is its generator. $h$ is an element of $\mathbb{G}_p$. $\mathbb{G}_p,p,g,h$ are public.

\textbf{Phase 1: MainUC.Freeze($\cdot$)}. The user $i \in [1, n]$ generates the \textit{Random Vector} $\bm{\zeta}^{i} = (\zeta^{i}_1, \zeta^{i}_2, \ldots, \zeta^{i}_{l\lambda})$ and calculates the \textit{Commitment Vector}: 
\begin{equation}
    \label{eq:commitment}
    \resizebox{0.9\columnwidth}{!}{$\begin{aligned}
        \bm{c}^{i} 
                          & =\left(\left(c^{i}_{1},\ldots,c^{i}_{\lambda}\right),\ldots,\left(c^{i}_{(l-1) \lambda +1},\ldots,c^{i}_{l \lambda }\right)\right)                                                                     \\
                          & =((g^{x^{i}_1}h^{{\zeta}^{i}_1},g^{x^{i}_2}h^{{\zeta}^{i}_2},\ldots,g^{x^{i}_{\lambda }}h^{{\zeta}^{i}_{ \lambda }}),\ldots,                                                               (g^{x^{i}_{(l-1) \lambda +1}}h^{{\zeta}^{i}_{(l-1) \lambda +1}},g^{x^{i}_{(l-1) \lambda +2}}h^{{\zeta}^{i}_{(l-1) \lambda +2}},\ldots,g^{x^{i}_{l \lambda }}h^{{\zeta}^{i}_{l \lambda }}).
    \end{aligned}$}
\end{equation}
Execute Main.Freeze($\cdot$).

\textbf{Phase 2: MainUC.SecAgg($\cdot$)}. Execute Main.SecAgg($\cdot$), once completed, all users obtain the aggregated result $\bm{sum}$. Each user $i$ sends $\bm{c}^{i}$ and $\bm{\zeta}^{i}$ to the server. For $j\in [1, l]$, $r\in [1,\lambda]$, $\mathcal{S}$ validates:
\begin{equation}
    \label{eq:opencommit}
    \prod_{i\in \mathcal{U}'} c^{i}_{(j-1)\lambda+r} \overset{\text{?}}{=}g^{sum_{(j-1)\lambda+r}}h^{\sum_{i\in \mathcal{U}'}{\zeta}^{i}_{(j-1)\lambda+r}}.
\end{equation}
If the validation fails, it implies the user deviates from the protocol, and the protocol is terminated. Otherwise, it signifies that the user employs the consistent original vector when generating the frozen vector and key vector, and the utilized public parameters are accurate. 

\textbf{Phase 3: MainUC.Thaw($\cdot$)}. Execute Main.Thaw($\cdot$).

Clearly, this extension will inevitably introduce a performance decrease. It's worth noting that in the majority of SAPs, many malicious user behaviors, such as incorrectly executing key agreements or submitting counterfeited secret shares, result in aggregation errors. And strictly enforcing user behavior remains a pending issue~\cite{Ma2023FlamingoMS}. 

\subsection{Evaluation of Extensions}
\label{appendix:eva_of_extensions}

\textbf{End-to-end comparison. } In accordance with the experimental settings of \cref{fig:end2end}, we evaluate the impact of RVE, UCE, and $\mu$-security on the overall model training time (as illustrated in the \cref{fig:end2end_supp}). The acceleration effect and communication expansion when integrating different extensions are 34.5× and 1.4× (RVE), 21.1× and 2.9× (UCE), 13.5× and 1.0× (5-security).

\begin{figure}[h!]
    \centering
    \includegraphics[scale=0.32]{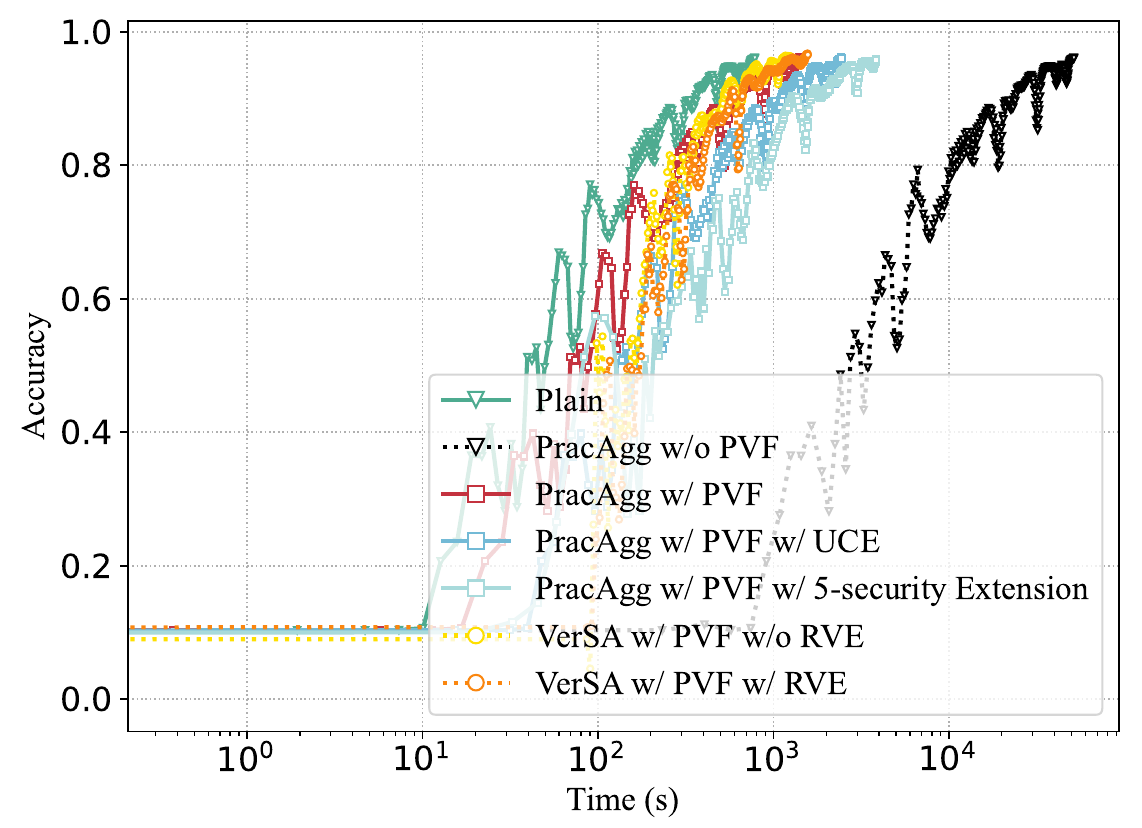}
    \caption{Comparison of various aggregation methods, with $n=100,\eta=5\%,\lambda=100$.}
    \label{fig:end2end_supp}
\end{figure}

$\bm{(\lambda,\mu)}$. We evaluate the speedup with different $(\lambda, \mu)\in\{100,300,500,700,1000\}\times[1,10]$. The integration of \sysname\ with the $\mu$-security extension reduces the entries of vectors involved in secure aggregation to $\frac{\mu + 1}{\lambda}(\mu < \lambda -1)$ of their original size. \cref{fig:lam_mu} showcases incorporating the $\mu$-security extension does diminish the improvement factor. However, even when $\mu=0.1\lambda$, \sysname\ still yields an acceleration gain of $10\times$ along with communication improvements exceeding $5\times$.

\textbf{RVE and UCE. }UCE necessitates users to commit to each dimension of the original vectors, while $\mathcal{S}$ needs to validate each dimension. RVE requires users to submit frozen vectors twice and $\mathcal{S}$ to perform summations twice. Thereby, both UCE and RVE impose on each participant an additional communication complexity of $O(m)$ and computation complexity of $O(m)$. 

Figure \ref{fig:exten} presents the overhead required for SAPs with different extensions. 
Compared to not integrating \sysname, the computation overhead after adding extensions is still nearly \textbf{an order of magnitude lower}. Communication costs from extensions primarily stem from the transmission of additional vectors, such as commitment vectors. We leave mitigating these overheads to future work.

\begin{figure*}[h!]
\centering
\includegraphics[width=0.8\textwidth]{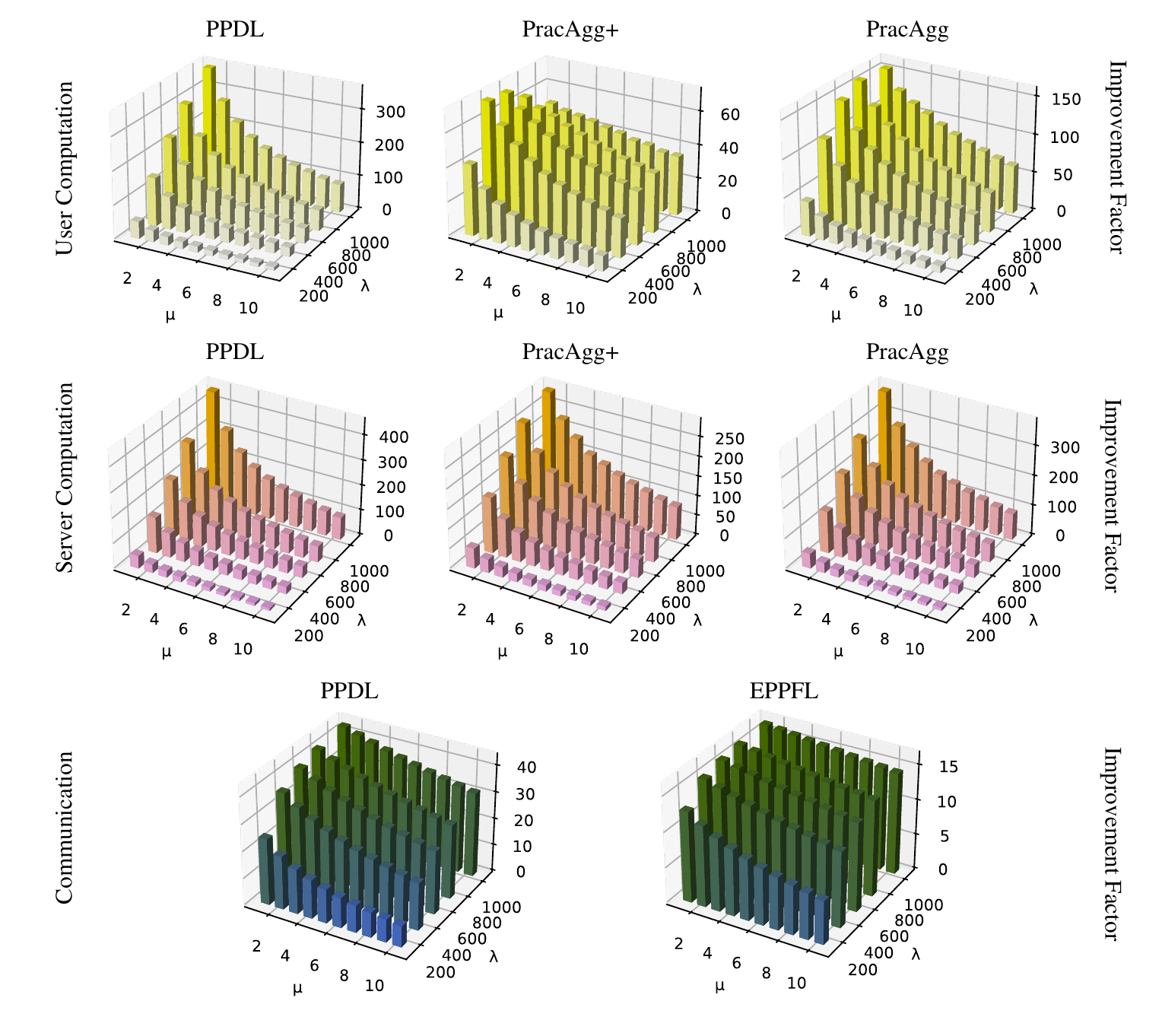}
\caption{The impact of different ${(\lambda,\mu)}$ on improvement factor. $n=100$, $m=100k$, $\lambda=100$, and $\eta=10\%$.}
\label{fig:lam_mu}
\end{figure*}

\begin{figure*}[h!]
\centering
\includegraphics[width=0.97\textwidth]{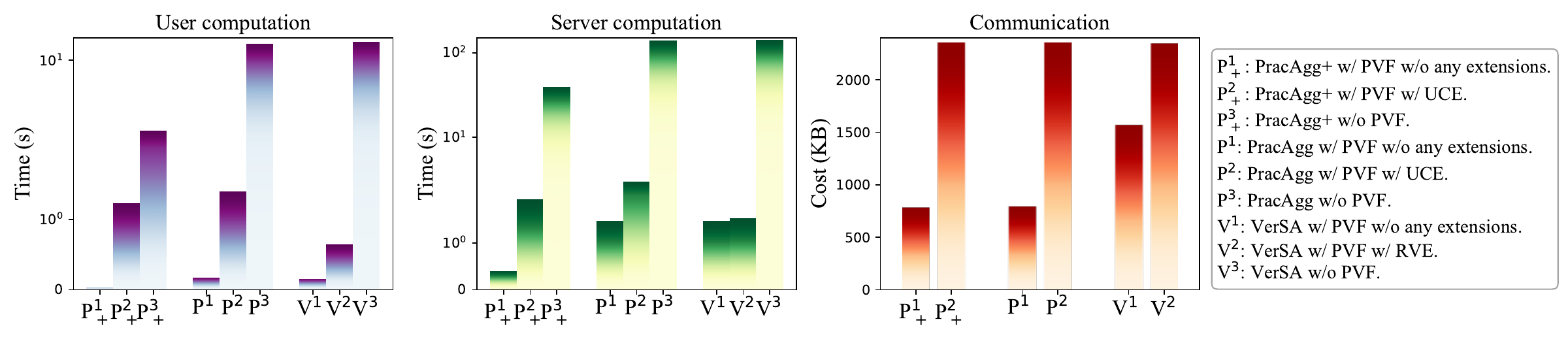}
\caption{Comparison of costs across different extensions. $n=100$, $m=100k$, $\lambda=100$, and $\eta=10\%$.}
\label{fig:exten}
\end{figure*}

\section{Portability Analysis}
\label{app:port}
\begin{table*}[!t]
    \begin{center}
        \caption{Symbolic representation of SAP execution process without and with \sysname. The symbols' meanings are as follows: $US$: User Selection; $UG$: User Grouping; $KA$: Key Agreements among Users; $SS$: Secret Sharing; $E/M$: Encryption or Masking user's original vectors; $UA$: Upload and Aggregation; $D/U$: Decryption or Unmasking vectors; $Ver$: Verification of aggregated results; $F$: Freezing process in \sysname\ Main Method, generating frozen vectors and key vectors; $T$: Thawing process in \sysname\ Main Method, deriving all entries based on aggregated entries; $RV$: Result Verification Extension of \sysname; $X'$: Specialized design for process $X$ in the corresponding protocol; $SA(\cdot)$: Operations involved in the aggregation process; $\square$: Protocol can execute in the semi-honest setting; $\blacksquare$: Protocol can execute in the active adversary setting. }
    \resizebox{\textwidth}{!}{\begin{tabular}{cccc}
                \toprule
                Type                            & Scheme                                   & w/o \sysname                                                       & w/ \sysname                                                            \\
                \midrule
                $\setminus$                    & PlainAgg                     & $[US,UA]$                                                     & $\setminus $                                                      \\
                \midrule
                \multirow{2}{*}{HE-based}       & PPDL           & $[US,\mathbf{SA}(KA,E/M,UA,D/U)]_{\square}$                   & $[US,F,\mathbf{SA}(KA,E/M,UA,D/U),T]_{\square}$                   \\
                {}                              & EPPFL         & $[US,\mathbf{SA}(KA,E/M,UA,D/U)]_{\square}$                   & $[US,F,\mathbf{SA}(KA,E/M,UA,D/U),T]_{\square}$                   \\
                \midrule
                SMPC-based                      & NIVP-DS              & $[US,\mathbf{SA}(KA,SS,E/M,UA,D/U)]_{\square}$                & $[US,F,\mathbf{SA}(KA,SS,E/M,UA,D/U),T]_{\square}$                \\
                \midrule
                \multirow{4}{*}{Mask-based}     & PracAgg & $[US,\mathbf{SA}(KA,SS,E/M,UA,D/U)]_{\square\blacksquare}$    & $[US,F,\mathbf{SA}(KA,SS,E/M,UA,D/U),T]_{\square\blacksquare}$    \\
                {}                              & PracAgg+       & $[US,\mathbf{SA}(UG,KA,SS,E/M,UA,D/U)]_{\square\blacksquare}$ & $[US,F,\mathbf{SA}(UG,KA,SS,E/M,UA,D/U),T]_{\square\blacksquare}$ \\
                {}                              & EffiAgg      & $[US,\mathbf{SA}(KA,SS,E/M',UA,D/U')]_{\square\blacksquare}$  & $[US,F,\mathbf{SA}(KA,SS,E/M',UA,D/U'),T]_{\square\blacksquare}$  \\
                {}                              & LPPFedL    & $[US,\mathbf{SA}(KA,SS,E/M',UA,D/U')]_{\square}$              & $[US,F,\mathbf{SA}(KA,SS,E/M',UA,D/U'),T]_{\square}$              \\
                \midrule
                Result Veri.                   & VerSA          & $[US,\mathbf{SA}(KA,SS,E/M,UA,D/U,Ver)]_{\square}$            & $[US,F,\mathbf{SA}(KA,SS,E/M,UA,D/U,Ver),RV,T]_{\square}$         \\
                \midrule
                \multirow{2}{*}{Multi. Privacy} & LTPA          & $[US',\mathbf{SA}(KA,SS,E/M,UA,D/U)]_{\square}$               & $[US',F,\mathbf{SA}(KA,SS,E/M,UA,D/U),T]_{\square}$               \\
                {}                              & MRSA            & $[US',\mathbf{SA}(KA,SS,E/M,UA,D/U)]_{\square}$               & $[US',F,\mathbf{SA}(KA,SS,E/M,UA,D/U),T]_{\square}$               \\
                \midrule
                Resist. M. Incon.            & $\setminus$                                & $[US,\mathbf{SA}(KA,SS,E/M',UA,D/U)]_{\square}$               & $[US,F,\mathbf{SA}(KA,SS,E/M',UA,D/U),T]_{\square}$               \\
                \bottomrule
            \end{tabular}}
        \label{tab:portability}
    \end{center}
\end{table*}
PVF does not attempt to alter SAP, and the decoupling greatly enhances the portability.
PPDL~\cite{aono2017privacy} and EPPFL~\cite{Li2022EfficientPF} employ homomorphic encryption, eliminating the need for secret sharing. NIVP-DS~\cite{xu2022non} constitutes a dual-server secure multi-party computation scheme, requiring users to share secrets between two servers. PracAgg~\cite{Bonawitz2017PracticalSA} and PracAgg+~\cite{Bell2020SecureSA} represent classic mask-based solutions, with PracAgg+ necessitating an additional user grouping process. EffiAgg~\cite{Liu2022EfficientDA} and LPPFedL~\cite{Wei2023LightweightFL} respectively introduce specialized masking mechanisms to lightweight PracAgg. VerSA~\cite{Hahn2023VerSAVS} enables users to verify the aggregated result at the conclusion, and its integration with \sysname\ requires RVE. LTPA~\cite{Liu2023LongTermPA} and MRSA~\cite{so2023securing} individually design specific user selection mechanisms to ensure privacy in \textbf{multi-round aggregation}. Resistance against \textbf{model inconsistency attacks} can be achieved by making slight modifications to PRG without incurring additional overhead~\cite{Ma2023FlamingoMS}. PVF also fits the \textbf{asynchronous} setting. Since PVF itself is one-shot and decoupled from specific SAP, it does not affect the one-shot masking or recovery in asynchronous SAP (such as LightSecAgg~\cite{so2022lightsecagg}).
We symbolically represent the entire process of federated learning aggregation in Table \ref{tab:portability}. It is evident that \sysname\ is decoupled from SAPs, not interfering with the internal execution process of SAP. 

\end{document}